\newcommand{\subfigref}[2]{\hyperref[#1]{\ref*{#1}.#2}}
\newtheorem{Definition}{Definition}
\newtheorem{Theorem}{Theorem}
\newtheorem{Corollary}{Corollary}
\newtheorem{Proof}{Proof}
\newtheorem{assumption}{Assumption}
\newtheorem{remark}{Remark}
\begin{document}

\title{\huge Temporal-Aware GPU Resource Allocation for Distributed LLM Inference via Reinforcement Learning}
 \author{     \IEEEauthorblockN{Chengze Du$^{1}$, Zhiwei Yu$^1$, Heng Xu$^1$, Haojie Wang$^2$, Bo Liu$^1$, Jialong Li$^{1,}$\textsuperscript{\Letter}} \\
     \IEEEauthorblockA{$^1$ Computer Science and Control Engineering, Shenzhen University of Advanced Technology,
 Shenzhen, China} \\
     \IEEEauthorblockA{$^2$ China Mobile Research Institute, Beijing, China} \\
     \IEEEauthorblockA{\Letter: \text{\texttt{lijialong@suat-sz.edu.cn} 
     }}
 }

\markboth{Journal of \LaTeX\ Class Files, 2025}
{Shell \MakeLowercase{\textit{et al.}}: A Sample Article Using IEEEtran.cls for IEEE Journals}
\maketitle
\begin{abstract}
The rapid growth of large language model (LLM) services imposes increasing demands on distributed GPU inference infrastructure. Most existing scheduling systems follow a reactive paradigm, relying solely on the current system state to make decisions, without considering how task demand and resource availability evolve over time. This lack of temporal awareness in reactive approaches leads to inefficient GPU utilization, high task migration overhead, and poor system responsiveness under dynamic workloads. In this work, we identify the fundamental limitations of these instantaneous-state-only scheduling approaches and propose Temporal Optimal Resource scheduling via Two-layer Architecture (TORTA). TORTA introduces a spatiotemporal scheduling framework that captures both long-term workload patterns and short-term execution constraints. It adopts a two-layer design: a macro-level scheduler leverages reinforcement learning and optimal transport to coordinate inter-region task distribution, while a micro-level allocator refines task-to-server assignments within each region to reduce latency and switching costs. Experimental results across multiple network topologies show that TORTA reduces average inference response time by up to 15\%, improves load balance by approximately 4-5\%, and cuts total operational cost by 10-20\% compared to state-of-the-art baseline methods.
\end{abstract}

\begin{IEEEkeywords}
Data Center Network, Reinforcement Learning, LLM Requests Schedule, Optimal Transport
\end{IEEEkeywords}

\section{Introduction}
\IEEEPARstart{T}{he} rise of large language models (LLMs) has influenced people’s lives, such as in text analysis and image generation~\cite{deepseekai2025deepseekv3technicalreport, claude3, openai2023gpt4, geminiteam2025geminifamilyhighlycapable}. Most LLM inference tasks heavily rely on GPU resources. For example, models like those developed by OpenAI and Anthropic (Claude) require substantial GPU computing power for training and inference, consuming significant energy and computational resources to process complex tasks efficiently~\cite{openai2024pretraining}. Due to geographical, political, and other factors (e.g., economic disparities and infrastructure limitations), the distribution of GPU resources is often uneven compared to user demand~\cite{openai2025status, techcrunch2025gpt45delay}. As shown in Figure~\ref{pic:intro}, global GPU resource distribution is polarized, with most resources concentrated in a few countries and regions (the numbers in the figure represent the number of server clusters)~\cite{computeNorth}. This imbalance significantly affects the quality of LLM product experiences. For instance, some LLM providers frequently fail to deliver timely services to users due to insufficient resources. This issue requires scheduling algorithms to better match supply and demand, accelerate the completion of GPU inference tasks, and thereby enhance user satisfaction with LLM service providers.

 Accelerating GPU inference can be divided into two categories. The first is local acceleration, which involves optimizing the GPU’s operational mode. This includes techniques such as optimizing checkpoint loading, introducing real-time migration and intelligent scheduling strategies, or employing model parallelism and statistical multiplexing to reduce unnecessary overhead~\cite{AlpaServe, ServerlessLLM, Lina, zhong2024distserve}. The second category is distributed resource scheduling, which aims to improve GPU resource utilization and prevent GPU server idleness~\cite{MERL_Yang_2024}. Some approaches focus on allocating suitable GPU resources to different tasks to achieve the fastest inference, often with additional optimization objectives such as minimizing communication costs and power consumption~\cite{Vakilinia, jiang2025demystifying}.

 \begin{figure}[tb]
	\centering
	\includegraphics[width=1.0\linewidth]{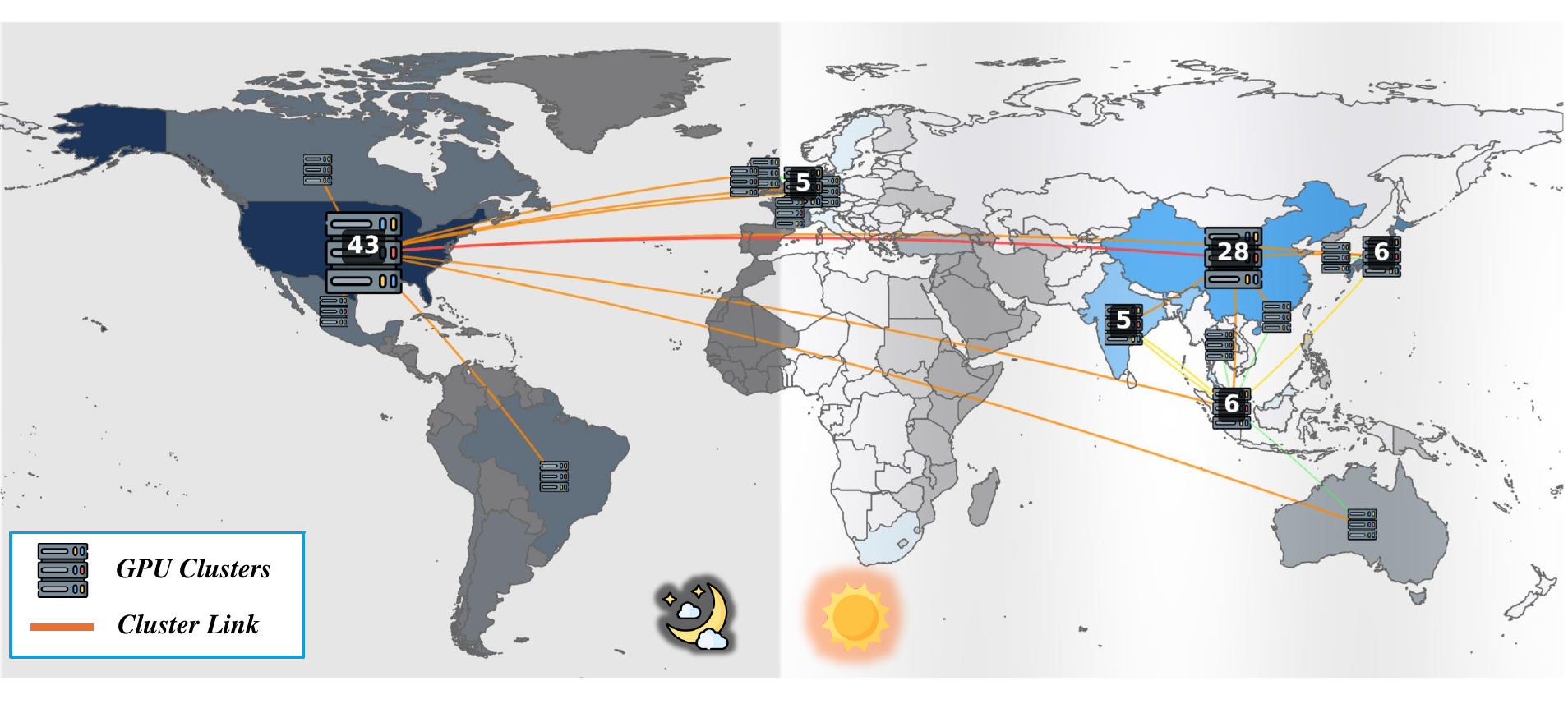}
         \vspace{-0.5cm}
         \caption{Temporal-Aware GPU resource allocation.}
	\label{pic:intro}
\end{figure}

The complexity of this scheduling problem is often very high. Finding an optimal scheduling solution typically requires significant computational time and increases dramatically with network scale expansion, severely impacting task completion time. Existing approaches usually seek approximately optimal solutions within acceptable time constraints, and these methods have achieved good results to a certain extent. For example, MLServerless~\cite{ServerlessLLM} employs an offline-plus-online strategy, where the offline phase invests substantial effort in learning solution strategies, enabling the system to quickly derive good solutions during the online phase to achieve optimal inference speed. Other approaches \cite{jiang2025demystifying, Vakilinia} model this as a Mixed Integer Linear Programming (MILP) or Integer Quadratic Programming (IQP) problem and accelerate solution speed through solution space pruning. However, most of these approaches are \textbf{"Reactive"}, making immediate decisions based on currently observed states (such as cluster interference levels, resource utilization rates, and energy prices), essentially finding optimal solutions within individual time slots. This paradigm inherently ignores the temporal perspective and continuity considerations in decision-making.

Beyond the geographical distribution considerations addressed by previous approaches, we also account for the temporal dependencies in user GPU requests. The temporal distribution provides us with more optimization opportunities. On a smaller time scale, we can more effectively utilize GPU warm-up effects (caching effects)~\cite{lai-warmup} by allocating similar tasks to models whenever possible, thereby reducing inference time. On a larger time scale, we can focus on long-term server utilization and continuous switching costs across clusters, using optimization methods to achieve better load balancing and avoid expensive switching costs~\cite{swap-jens, kun-switch}. However, a critical consideration is that incorporating the temporal dimension increases the solution space by an order of magnitude compared to previously studied problems, undoubtedly increasing the difficulty of solving using traditional methods (such as nested dynamic programming).

Given these challenges, our approach addresses GPU resource allocation by integrating reinforcement learning with an optimal transport~\cite{villani2008optimal} perspective, particularly incorporating the temporal dimension. Unlike traditional methods, our strategy—termed \textbf{T}emporal \textbf{O}ptimal \textbf{R}esource scheduling via \textbf{T}wo-layer \textbf{A}rchitecture (TORTA)—divides the allocation process into two hierarchical layers: macro and micro. In the first layer \texttt{(macro)}, TORTA assigns GPU inference tasks to nodes in each region, primarily to mitigate the imbalance between GPU resources and task distribution. By leveraging optimal transport, inference costs are significantly reduced. In the second layer \texttt{(micro)}, within regional server clusters, TORTA assigns specific GPU inference tasks to suitable servers. The quality of server selection not only reduces task completion time but also avoids oscillations and high overhead caused by frequent server state transitions. Through offline training with large-scale samples, the model can quickly solve problems during the online phase.

Specifically, we model the GPU resource scheduling problem as a Markov Decision Process (MDP) with states encompassing GPU utilization, queue lengths, network latency, and temporal features. The action space involves regional allocation decisions, while the reward function balances response time, load variance, and operational cost. We employ Proximal Policy Optimization (PPO)~\cite{schulman2017-ppo} with optimal transport decisions as supervised signals, enabling the system to learn multi-time-slot allocation strategies that achieve global optimization rather than myopic single-slot decisions.

The major \textit{\textbf{contributions}} of our work are summarized as follows:

\begin{itemize}
    \item \textit{Analysis.} \textbf{We identify the limitations of existing "Reactive" GPU scheduling approaches.} These include a lack of foresight and disregard for temporal dependencies, which can significantly increase the completion time of GPU inference tasks and the switching costs of GPU servers. Additionally, we establish performance upper bound for these methods and derive explicit conditions under which temporal-aware optimization can provably outperform this bound.
    \item \textit{Scheme.} \textbf{We propose a reinforcement learning scheme combined with optimal transport called TORTA.} It integrates spatial and temporal dimensions to achieve better GPU utilization and load balancing compared to traditional approaches. By employing a two-layer resource allocation strategy, the scheme effectively reduces the complexity of solving problems after incorporating the temporal dimension.
    \item \textit{Evaluation.} \textbf{We demonstrate through extensive experiments across various scenarios.} They include four different network topologies and multiple performance metrics. We find that our approach achieves superior response times (16.39s-19.31s vs. 18.72s-24.39s for baselines) and load balance coefficients (0.74-0.76 vs. 0.68-0.73 for baselines) while maintaining lower operational costs with power cost reductions of 10.7K-14.1K compared to existing methods.
\end{itemize}

We organize the remainder of this paper as follows. In Section~\ref{sec: motivation} and \ref{sec: challenges}, we present our motivation, emphasizing the importance of incorporating the temporal dimension in the GPU scheduling problem. Section~\ref{sec: design} provides a detailed description of our methodology, illustrating how reinforcement learning combined with optimal transport is employed to address GPU resource allocation. In Section~\ref{sec: exp}, we conduct extensive experiments to demonstrate the effectiveness of our approach. Sections~\ref{sec: related_works} and~\ref{sec: conclusion} cover related work, and conclude the paper with conclusions and future directions, respectively. Detailed theoretical proofs for performance guarantees, comprehensive network training procedures, and additional experimental results are provided in the Appendix as supplemental material.

\section{Motivation:\\ The Need for Temporal-Aware Allocation}
\label{sec: motivation}
In this section, we primarily answer two questions. 
\begin{enumerate}
    \item \textit{What are the limitations of reactive approaches that do not consider temporal dynamics? }
    \item \textit{How does introducing temporal awareness mitigate or address these issues?}
    \vspace{-0.4cm}
\end{enumerate}

\subsection{Reactive Scheduling and Its Limitations}
\label{sec: motivation_1}

Reactive scheduling strategies dominate current GPU resource allocation systems. These methods make decisions based solely on the system's state at the current moment, lacking the ability to predict future load patterns. This short-sighted decision-making approach is particularly inefficient when handling predictable traffic patterns.

Figure~\ref{fig:motivation_1} illustrates the performance difference between reactive and predictive scheduling when handling periodic traffic peaks. Reactive approach only respond passively after a traffic surge (as seen in the red portion of Figure~\subfigref{fig:motivation_1}{a}), leading to a significant number of requests waiting in queues. Figure~\subfigref{fig:motivation_1}{b} shows the distribution of GPU inference task queuing times during traffic surges, revealing a clear "bimodal pattern" where queuing times are predominantly either long or short, with long queuing times significantly exceeding short ones. In real-world scenarios, this greatly impacts the user experience of LLM services. In contrast, an ideal predictive approach (the blue dashed line in Figure~\subfigref{fig:motivation_1}{a}) prepares resources 15 minutes in advance through forecasting, resulting in a smoother overall power increase. 

Figure~\subfigref{fig:motivation_1}{c} displays the variation in average queuing time throughout the process, showing a clear "staircase effect". The system only begins scaling up after detecting a load increase, and GPUs require 1-3 minutes to transition from cold start to full readiness. During this period, accumulated requests result in an average queuing time of 15.7 seconds. Queuing time gradually decreases as servers scale up, eventually dropping to within 1 second after the GPU request load stabilizes.

The absence of foresight affects not only immediate service quality but also triggers cascading effects. Sudden queuing can prompt client retries, further increasing system load~\cite{ha2024retrystorm, huang2022metastablefail}. Additionally, passive scaling often overreacts, leading to resource waste after traffic subsides~\cite{Naseer2020ZDR}. These issues highlight the necessity of incorporating temporal awareness mechanisms.
\begin{figure}[t]
	\centering
	\includegraphics[width=1.02\linewidth]{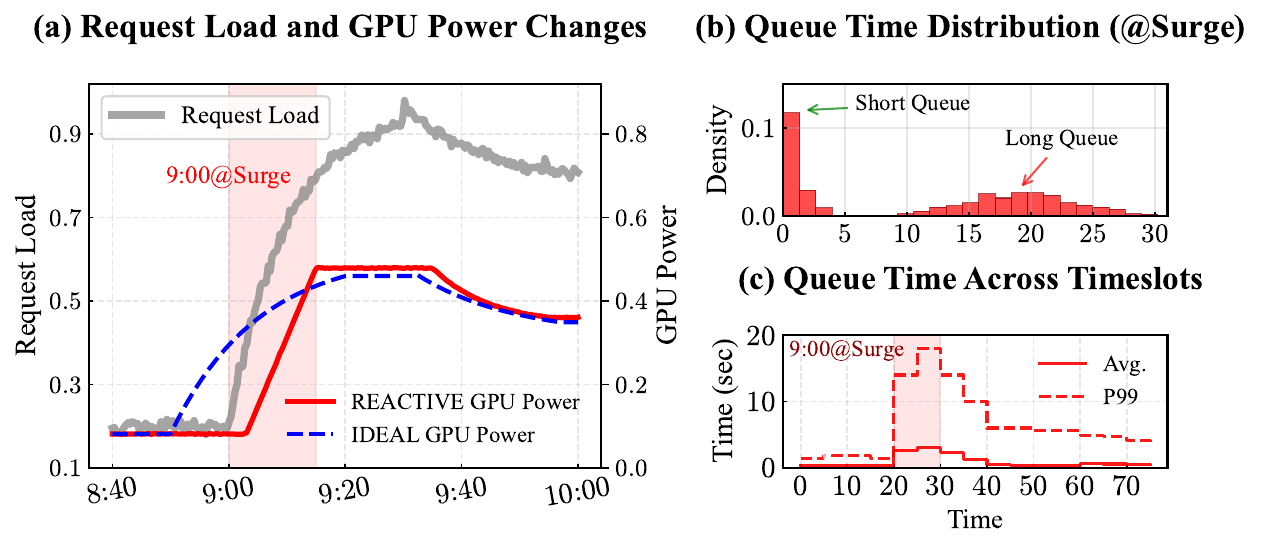}
         \vspace{-0.2cm}
	\caption{Visualization of the limitations of reactive scheduling.}
	\label{fig:motivation_1}
        \vspace{-0.5cm}
\end{figure}

\subsection{Costs of Ignoring Decision Continuity}
\label{sec: motivation_2}

Single-time-slot optimization overlooks the temporal dependencies between decisions. Each scheduling decision is treated as an independent event, making it difficult to account for the costs of state transitions and the lasting impact of decisions. This fragmented optimization approach leads to drastic configuration changes between adjacent time slots, resulting in unnecessary task migrations and resource switching.

Figure~\subfigref{fig:motivation_2}{a} illustrates the hidden costs of task migration and model switching. For a typical large model (LLaMA-2-7B, deployed on an V100), migrating a task from node A to node B involves: (1) model state serialization (\textasciitilde15.2 seconds), (2) target node deserialization (\textasciitilde4.8 seconds) and GPU memory loading (\textasciitilde5.6 seconds), and (3) inference engine warm-up (\textasciitilde5.1 seconds). Switching between different models on the same server (e.g., from LLaMA-2-7B to Qwen-7B) also incurs significant costs, including unloading the current model (\textasciitilde3.5 seconds), memory cleanup (\textasciitilde2.1 seconds), loading the new model (\textasciitilde6.8 seconds), state initialization (\textasciitilde14.2 seconds), and engine reconfiguration (\textasciitilde3.4 seconds). We conducted tests on various GPU types (shown in Figure~\subfigref{fig:motivation_2}{b}). The figure indicates that the V100 exhibits higher migration costs across all stages compared to the H100, RTX 4090 and RTX 3090. These experimental results demonstrate that even task migration and model switching between small-scale language models incur substantial system overhead.

GPU power consumption varies significantly across different stages shown in Figure~\subfigref{fig:motivation_2}{c}. During deserialization and memory loading, power consumption increases notably and exhibits significant fluctuations. For an V100 with a power consumption of 250W, the peak power can reach 237W. Frequent cold-to-hot transitions in GPUs may accelerate hardware aging and adversely affect grid stability~\cite{Zhabelova2015dcpower}. 

These observations reveal a fundamental limitation, reactive methods face a performance upper bound determined by optimal single-timeslot solutions plus unavoidable switching costs. Our analysis in Appendix proves this upper bound exists and provides conditions for surpassing it through temporal-aware optimization. And we aim to design smoother state transition strategies that reduce switching overhead while maintaining flexibility.

\begin{figure}[t]
	\centering
	\includegraphics[width=1.02\linewidth]{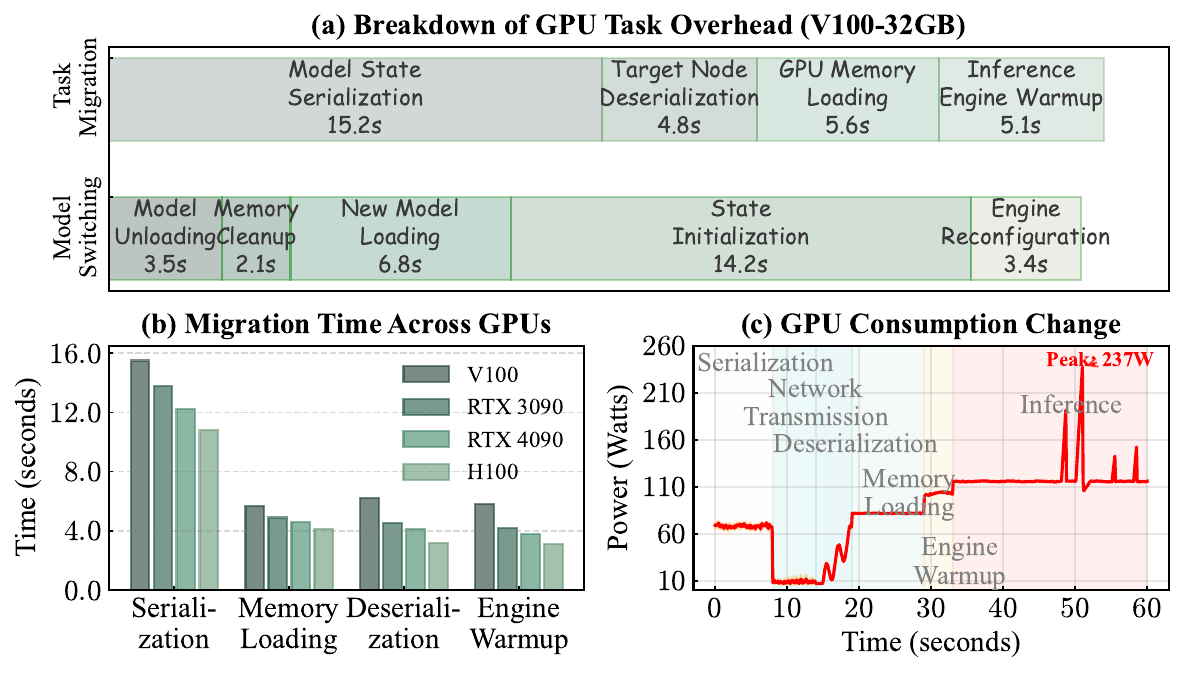}
         \vspace{-0.2cm}
	\caption{Breakdown of GPU Task Overhead and Power Consumption During Task Migration and Model Switching.}
	\label{fig:motivation_2}
        \vspace{-0.3cm}
\end{figure}

\subsection{Allocation Inefficiency Without Coordination}
\label{sec: motivation_3}
Reactive decision-making lacks foresight and coordination across multiple time slots. This short-sightedness becomes particularly evident when addressing sudden events. Figure~\subfigref{fig:motivation_3}{a} illustrates the geographical distribution of GPU resources and their inter-regional connections, highlighting the limitations of reactive strategies when a critical region experiences a failure (as indicated by “CRITICAL FAILURE” in the figure) which we stimulated in our system environment.

Figure~\subfigref{fig:motivation_3}{c} visualizes the recovery process of traditional single-time-slot methods in response to sudden GPU failures. Upon detecting a failure, these methods, lacking foresight into future load and resource conditions, tend to blindly migrate affected tasks to the nearest available regions within the first time slot (T1). This immediate but shortsighted decision causes a sharp increase in task queuing latency in these neighboring regions during T1, with a significant portion of tasks being dropped due to resource overload. Recovery occurs gradually over the subsequent three time slots (T2, T3, T4), but this “delayed response” significantly impacts user experience and system efficiency.

\begin{figure}[t]
	\centering
	\includegraphics[width=1.02\linewidth]{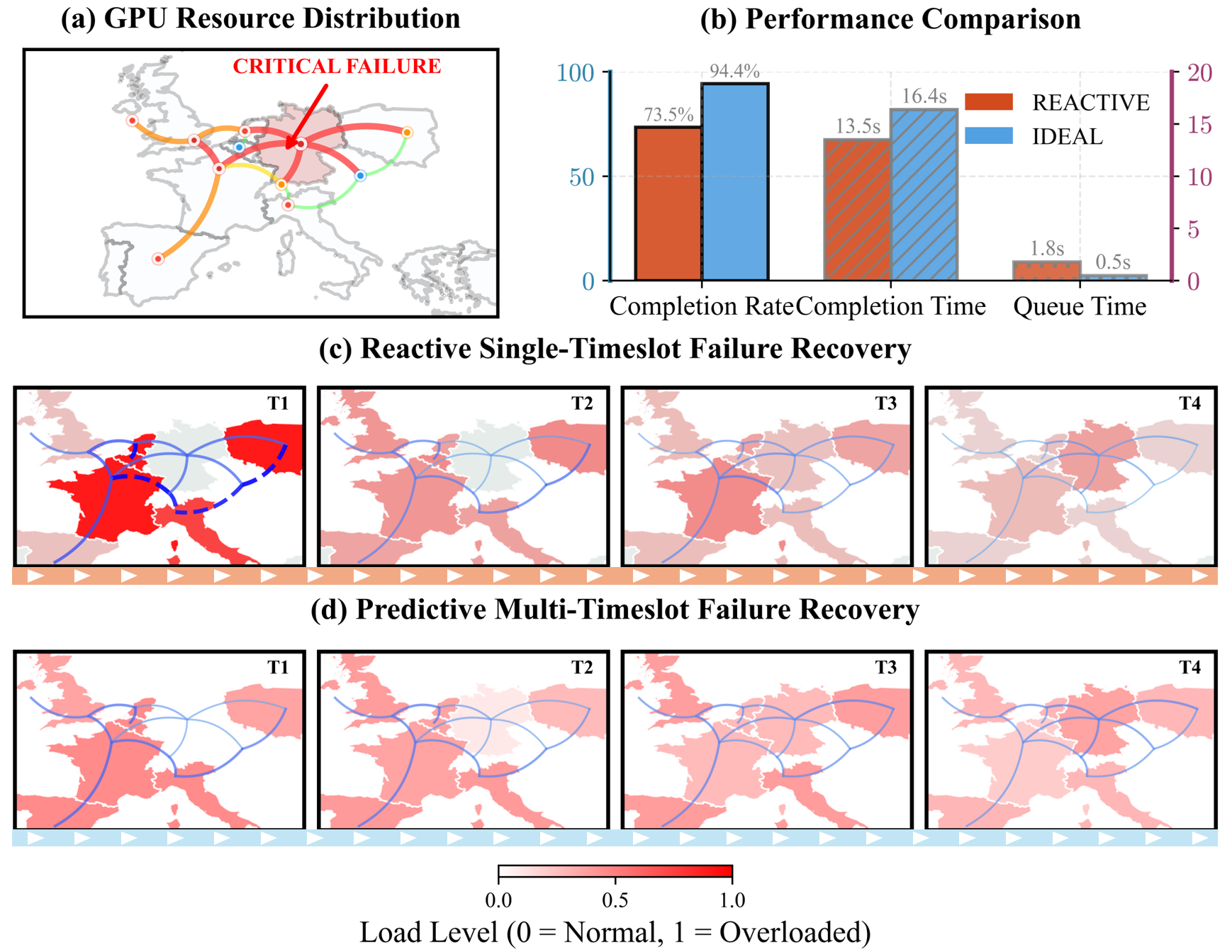}
         \vspace{-0.4cm}
	\caption{Comparison of Reactive and Predictive Scheduling Strategies Inference Under Critical Failure Conditions.}
	\label{fig:motivation_3}
        \vspace{-0.5cm}
\end{figure}

Figure~\subfigref{fig:motivation_3}{d} demonstrates the recovery strategy of multi-time-slot predictive methods under the same failure scenario. By integrating temporal information and predictive capabilities, this approach intelligently and evenly distributes tasks across multiple future time slots and geographical regions, effectively avoiding local overload. This smooth migration strategy not only significantly improves overall task completion rates and reduces queuing times but also shortens average completion times. As shown in Figure~\subfigref{fig:motivation_3}{b}, the reactive methods have shorter completion time, primarily due to its task-dropping mechanism that seriously affects user experience. The predictive approach exhibits clear advantages in performance metrics (completion rate, and queuing time) during failure recovery.

\section{Key Challenges}
\label{sec: challenges}

Although introducing temporal-aware mechanisms appears to address some limitations of reactive methods discussed in \S~\ref{sec: motivation}, their implementation still faces several challenges. 

\vspace{-0.3cm}
\subsection{Multi-Objective Optimization Complexity}
\label{sec: challenge_1}
Optimizing GPU resource allocation requires balancing response time, operational costs, and load balancing across distributed clusters, posing a complex combinatorial optimization problem. When $N$ tasks arrive in each time window and $M$ regional clusters each contain $K$ servers, even single-time-slot allocation yields  $O((M \times K)^N)$ possible solutions. Real-world scenarios often involve over 50 clusters and more than 100,000 tasks, making exhaustive search infeasible. Heterogeneous task types (e.g., compute-intensive or memory-intensive) require distinct resource-matching strategies, further increasing problem complexity.

Current methods model this as a mixed-integer linear programming (MILP) problem, but the computational cost becomes impractical in large-scale scenarios. As shown in Figure~\subfigref{fig: challenge}{a}, MILP solving time grows exponentially with task volume—handling 5,000 tasks takes over 2 minutes on our experimental hardware\footnote{Tested on an Intel i5-13490F (10 cores, 4.8GHz boost).}, while production systems require sub-second decisions~\cite{zhong2024distserve}. Note that our experimental scenario represents a significantly simplified version of real-world deployments, which involve dozens of server types with heterogeneous GPU architectures, diverse task categories with complex resource requirements, dynamic pricing models, and multi-tenant constraints. The computational complexity grows exponentially with these additional dimensions, making traditional MILP approaches fundamentally unsuitable for real-time multi-objective optimization. 

\begin{figure}[t]
    \fontsize{6pt}{1pt}\selectfont
    \centering
    \begin{minipage}[c]{0.47\linewidth}
        \includegraphics[width=\linewidth]{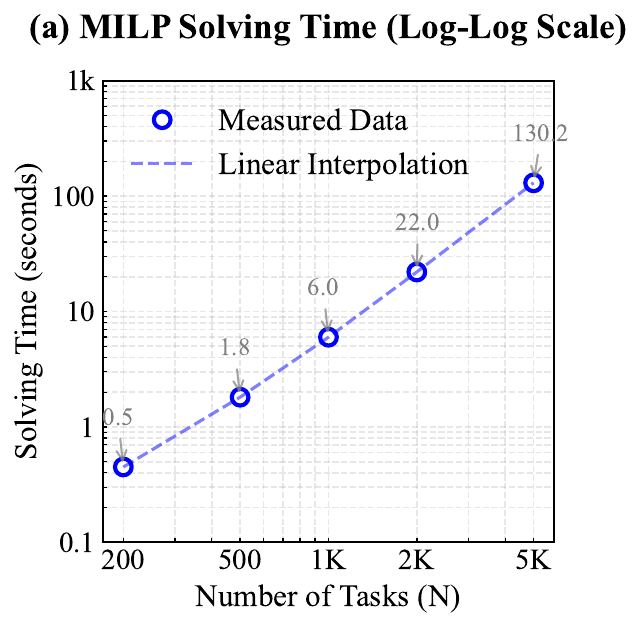}
    \end{minipage}
    \hfill
    \begin{minipage}[c]{0.50\linewidth}
        \renewcommand{\arraystretch}{0.1} 
        {\scriptsize (b) Configuration parameters \centering \par}
        \begin{tabular}{@{}p{0.23\linewidth}p{0.33\linewidth}p{0.28\linewidth}@{}}
        \toprule
        \textbf{Component} & \textbf{Configuration} & \textbf{Scale} \\
        \midrule
        Infrastructure & 5 regions, 10 servers each & 50 total servers \\
        \midrule
        Task Properties & 2 types (compute/memory) & Variable arrival rates \\
        \midrule
        Capacity Model & Dynamic Server limits & 3-20 tasks per server \\
        \midrule
        Load Balancing & Max 80\%  Per region & Prevents concentration \\
        \midrule
        Variables & Binary matrix & $N \times 50$ decisions \\
        \midrule
        Constraints & Assignment + capacity & $N + 55$ constraints \\ 
        \bottomrule
        \end{tabular}
    \end{minipage}
    \caption{MILP strategy solving time.}
    \vspace{-0.6cm}
    \label{fig: challenge}
\end{figure}

\vspace{-0.3cm}
\subsection{How to Optimize for Long-term Benefits?}
\label{sec: challenge_2}
Optimizing GPU resources across multiple time slots requires reasoning about future states and their impacts. The system must learn complex temporal patterns: how current allocation decisions affect future load distribution, when to preheat servers for anticipated demand, and how to maintain stability amid changes. Each decision triggers cascading effects—allocating tasks to a specific region not only impacts immediate load but also alters migration patterns, queue accumulation, and resource availability in subsequent time slots. These temporal dependencies fundamentally create a long-term credit assignment challenge, where:

The core difficulty manifests in several ways. First, decisions with delayed consequences—like keeping servers preheated—may initially appear suboptimal but prove critical during demand surges. Conversely, aggressive task packing for immediate efficiency often induces cascading migrations that degrade overall performance. Second, the system must infer these long-term causal relationships from historical data while remaining adaptive to new patterns. Most critically, the high-dimensional state space (encompassing load distribution, server states, and queue lengths across regions) combined with a continuous action space renders traditional dynamic programming methods infeasible.

\section{Torta Overview}
We propose a scheme called \textbf{T}emporal \textbf{O}ptimal \textbf{R}esource scheduling via \textbf{T}wo-layer \textbf{A}rchitecture \textbf{(TORTA)} to address the limitations of reactive scheduling in \S\ref{sec: motivation} and the challenges of multi-objective optimization complexity in \S\ref{sec: challenges}. This scheme systematically addresses the shortcomings of existing methods in handling temporal dependencies, decision continuity, and long-term optimization through temporal dimension modeling and a two-layer optimization architecture.

We remodel GPU resource allocation as a spatio-temporal coupled optimization problem. This considers both spatial heterogeneity (GPU resource distribution and performance differences across geographical regions) and temporal dynamics (load fluctuations, state transition costs, and long-term dependencies). Through this unified spatiotemporal modeling, the system can fully consider the impact of current allocation decisions on future system states when making decisions, thereby avoiding the "short-sighted" problems caused by traditional reactive methods. The optimization objectives include not only immediate task completion time and computational costs, but also minimizing state transition overhead across time slots.

For this optimization problem, we decomposes the originally exponential-complexity global optimization problem into two relatively independent but coordinated subproblems by designing a two-layer hierarchical architecture:

$\circ$ \textit{Macro Regional Allocation (\S\ref{subsec: marco}):} This layer handles strategic resource scheduling tasks across regions, focusing on solving geographical distribution imbalances and large-scale load balancing problems. The layer first uses a \textbf{demand predictor} (\S\ref{subsubsec: rl}) to analyze historical load patterns and temporal features, predicting request distributions for future time slots in each region to provide decision basis for subsequent optimization. The core \textbf{RL-OT optimizer} combines the long-term planning capability of reinforcement learning with the supply-demand matching advantages of optimal transport theory to generate inter-regional allocation strategies that consider temporal continuity.

$\circ$ \textit{Micro Server Selection (\S\ref{subsec: micro}):} Under the regional allocation constraints determined by the macro layer, this layer focuses on fine-grained task-server matching problems within clusters. The task-server optimization engine comprehensively evaluates hardware compatibility, real-time load status, and server warm-up conditions to compute optimal server allocation schemes for each task, and continuously optimizes initial allocation results through local search algorithms. The state manager coordinates the timing of server state transitions, intelligently deciding the switching timing of each server between different operating states based on load predictions, significantly reducing the high switching costs analyzed in \S\ref{fig:motivation_2} through proactive state management.

\section{Design of TORTA}
\label{sec: design}
We outline our problem modeling and optimization objectives and then divide our scheme into macro and micro components for detailed explanation.
\subsection{Problem Formulation}
\label{subsec: prob_formulation}

\begin{figure}[t]
	\centering
	\includegraphics[width=\linewidth]{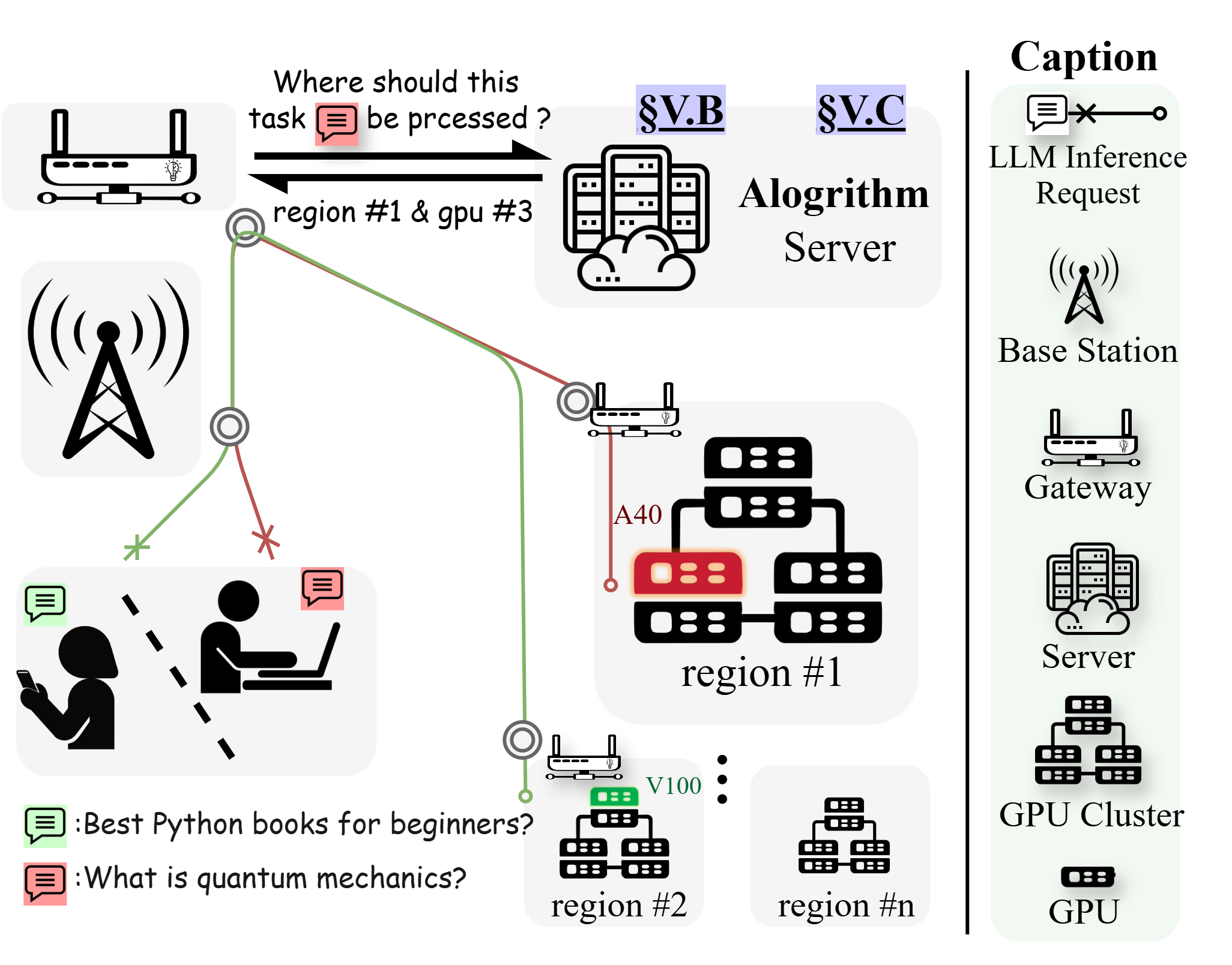}
         \vspace{-0.4cm}
	\caption{Overview of TORTA's system architecture.}
	\label{fig: overview}
        \vspace{-0.5cm}
\end{figure}

We formalize the temporal-aware GPU resource allocation problem as a spatiotemporal optimization problem with multiple time slots. Consider a distributed system consisting of $R$ geographical regions, where each region $r \in \{1, 2, \ldots, R\}$ contains $S_r$ GPU servers. The optimization is performed over a time horizon of $T$ discrete time slots, where each time slot has a duration of $\Delta t$.

Let $\mathcal{N}_t = \{n_1^t, n_2^t, \ldots, n_{N_t}^t\}$ denote the set of GPU inference tasks arriving at time slot $t$, where $N_t$ represents the total number of tasks in slot $t$. Each task $n_i^t$ is characterized by a tuple $(c_i^t, m_i^t, d_i^t)$, representing its computational requirements, memory requirements, and deadline constraint, respectively. The system state at time slot $t$ is defined as $s_t = (U_t, Q_t, L_t, H_t)$, where $U_t \in \mathbb{R}^{R \times S_{max}}$ represents the utilization matrix of all servers across regions, $Q_t \in \mathbb{R}^R$ denotes the queue lengths in each region, $L_t \in \mathbb{R}^{R \times R}$ is the inter-region latency matrix, and $H_t \in \mathbb{R}^{R \times S_{max}}$ captures the historical load patterns and temporal features.

The allocation decision at time slot $t$ is represented by the assignment matrix $X_t \in \{0, 1\}^{N_t \times R \times S_{max}}$, where $X_{i,r,s}^t = 1$ indicates that task $n_i^t$ is assigned to server $s$ in region $r$, and 0 otherwise. The allocation must satisfy the following constraints: each task is assigned to exactly one server, i.e., $\sum_{r=1}^R \sum_{s=1}^{S_r} X_{i,r,s}^t = 1$ for all $i \in \{1, \ldots, N_t\}$. 

The \textbf{objective function} consists of three components: response time, power cost, and switching cost.

\underline{The response time} for task $n_i^t$ assigned to server $s$ in region $r$ is modeled as $T_{completion}(i,r,s,t) = T_{queue}(r,s,t) + T_{compute}(i,r,s) + T_{network}(i,r)$\footnote{Network latency is typically 1\% of compute time. For reference, GPT-4 generates tokens at 13 tokens/s; a 250-word conversation takes $\sim$25s, while intercontinental latency ranges between 10--100ms.}, where $T_{queue}(r,s,t)$ represents the queuing delay, $T_{compute}(i,r,s)$ is the actual computation time, and $T_{network}(i,r)$ accounts for network transmission delays.

\underline{The power cost} $C_{power}^t$ captures the electricity consumption costs across geographical regions, which vary significantly due to differences in regional electricity pricing and infrastructure efficiency. This cost component accounts for both the computational power required for task processing and the regional cost variations that enable cost-aware geographical optimization. Formally, $C_{power}^t = \sum_{i=1}^{N_t} \sum_{r=1}^R \sum_{s=1}^{S_r} X_{i,r,s}^t \cdot (w_{network} \cdot L_{origin(i),r} + w_{compute} \cdot P_{r,s})$, where $origin(i)$ denotes the origin region of task $i$, $w_{network}$ and $w_{compute}$ are weight parameters, $L_{origin(i),r}$ represents the network transmission cost, and $P_{r,s}$ represents the computational power cost at server $s$ in region $r$.

\underline{The switching cost} $C_{switch}^t$ directly measures the allocation differences between consecutive timeslots, capturing the overhead of changing resource allocation patterns over time. This cost is crucial for capturing the temporal dependencies identified in our motivation analysis. We define $C_{switch}^t = ||X_t - X_{t-1}||_F^2$, where $X_t$ and $X_{t-1}$ represent the allocation matrices at timeslots $t$ and $t-1$ respectively, and $||\cdot||_F$ denotes the Frobenius norm. This formulation directly penalizes dramatic allocation changes between consecutive timeslots, encouraging temporal smoothness while allowing necessary adaptations to changing demand patterns.

The complete optimization problem can be formulated as finding an allocation policy $\pi: \mathcal{S} \rightarrow \mathcal{A}$ that minimizes the expected cumulative cost over the entire time horizon:
\begin{equation}
\begin{aligned}
\min_{\pi} \mathbb{E} \Bigg[ & \sum_{t=1}^{T} \bigg( \sum_{i=1}^{N_t} T_{completion}(i,r,s,t) \\
& \quad + \alpha \cdot C_{switch}^t + \beta \cdot C_{power}^t \bigg) \Bigg]
\end{aligned}
\end{equation}

where $\alpha$ and $\beta$ are weight parameters that serve dual purposes: (1) they balance the relative importance of switching costs and power costs against response time, and (2) they normalize these heterogeneous metrics (time in seconds, power in watts, switching overhead in computational cycles) into a unified optimization objective with consistent units. 

\subsection{Macro-Level Allocation: OT with RL}
\label{subsec: marco}

The macro-level allocation layer addresses the strategic distribution of GPU inference tasks across geographical regions. This layer operates by first establishing an upper-bound allocation strategy using optimal transport theory (as proved in Appendix), which then serves as a supervised signal for reinforcement learning optimization. The two-stage scheme enables the system to leverage the theoretical optimality guarantees of optimal transport while adapting to complex temporal dynamics through reinforcement learning.
\begin{figure}[t]
	\centering
	\includegraphics[width=0.98\linewidth]{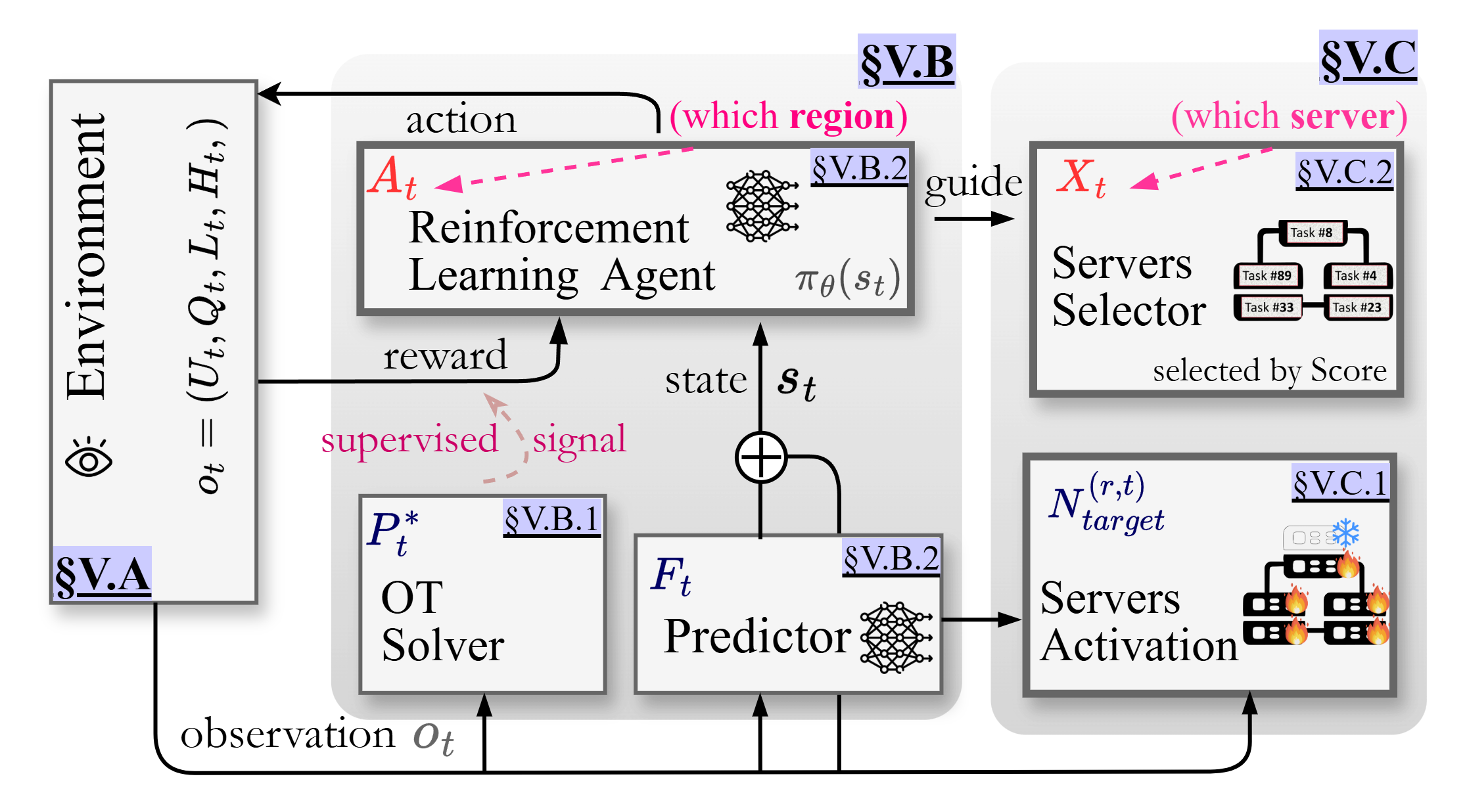}
	\caption{Key component of TORTA.}
	\label{fig: component}
        \vspace{-0.5cm}
\end{figure}

\subsubsection{Optimal Transport for Regional Load Balancing}

We model the regional allocation problem as an optimal transport problem between GPU request distributions and GPU resource distributions. Let $\mu_t \in \mathbb{R}^R$ represent the normalized GPU request distribution at time slot $t$, where $\mu_t^{(r)}$ denotes the proportion of total requests originating from region $r$. Similarly, let $\nu_t \in \mathbb{R}^R$ represent the normalized GPU resource distribution, where $\nu_t^{(r)}$ indicates the proportion of total GPU capacity available in region $r$.

In practice, the sum of raw request counts and resource capacities across regions are typically unequal, i.e., $\sum_{r=1}^R RequestCount_r^t \neq \sum_{r=1}^R ResourceCapacity_r^t$. To apply optimal transport theory, we normalize both distributions to unit mass: $\mu_t^{(r)} = \frac{RequestCount_r^t}{\sum_{i=1}^R RequestCount_i^t}$ and $\nu_t^{(r)} = \frac{ResourceCapacity_r^t}{\sum_{i=1}^R ResourceCapacity_i^t}$, ensuring $\sum_{r=1}^R \mu_t^{(r)} = \sum_{r=1}^R \nu_t^{(r)} = 1$.

The cost matrix $C \in \mathbb{R}^{R \times R}$ captures the cost between regions, where $C_{i,j}$ represents the cost of serving a request from region $i$ using resources in region $j$. This cost primarily incorporates regional power cost differences and minimal network overhead: $C_{i,j} = w_1 \cdot (PowerCost_j) + w_2 \cdot (L_{i,j} + BandwidthCost_{i,j})$, where $PowerCost_j$ represents the electricity cost in region $j$, $L_{i,j}$ is the network latency between regions $i$ and $j$, $BandwidthCost_{i,j}$ accounts for data transfer overhead, and $w_1 \gg w_2$ reflects the dominance of power costs over network costs in the optimization objective.

The optimal transport problem seeks to find the allocation matrix $P^* \in \mathbb{R}^{R \times R}$ that minimizes the total transportation cost while satisfying the marginal constraints:
\begin{equation}
    P^* = \arg\min_{P \geq 0} \sum_{i=1}^R \sum_{j=1}^R C_{i,j} P_{i,j}
\end{equation}
subject to the constraints $\sum_{j=1}^R P_{i,j} = \mu_t^{(i)}$ for all $i$ (demand satisfaction) and $\sum_{i=1}^R P_{i,j} = \nu_t^{(j)}$ for all $j$ (capacity constraints).

The solution $P^*$ provides an upper-bound allocation strategy where $P^*_{i,j}$ represents the optimal mass flow from region $i$ to region $j$. To convert this into actionable routing probabilities, we normalize each row of the allocation matrix: $Prob_{i \rightarrow j} = \frac{P^*_{i,j}}{\sum_{k=1}^R P^*_{i,k}}$. This normalization yields a row-stochastic matrix where $Prob_{i \rightarrow j}$ represents the probability that a task originating from region $i$ should be routed to region $j$ for processing.

The resulting probability distribution $Prob_i = [Prob_{i \rightarrow 1}, Prob_{i \rightarrow 2}, \ldots, Prob_{i \rightarrow R}]$ for each source region $i$ serves as a reference allocation strategy that balances supply and demand optimally under the given cost structure. However, this static optimal transport solution does not account for temporal dependencies, future load predictions, or the dynamic nature of system states. To address these limitations, we integrate the optimal transport solution as a supervised signal within our reinforcement learning framework.

\subsubsection{Reinforcement Learning with OT Supervision}
\label{subsubsec: rl}
The reinforcement learning component learns to generate temporally smooth allocation matrices that respect the optimal transport baseline while minimizing abrupt load transitions between consecutive time slots. The primary objective is to produce allocation decisions that maintain system stability and avoid the cascading effects of sudden load redistribution across regions.

We formulate the problem as a Markov Decision Process (MDP) with state space $\mathcal{S}$, action space $\mathcal{A}$, transition function $\mathcal{T}$, and reward function $\mathcal{R}$. The state at time slot $t$ is defined as $s_t = (U_t, Q_t, L_t, H_t, F_t, A_{t-1})$, where $U_t \in \mathbb{R}^{R}$ represents the current utilization levels across regions, $Q_t \in \mathbb{R}^{R}$ denotes the queue lengths in each region, $L_t \in \mathbb{R}^{R \times R}$ is the inter-region latency matrix, $H_t$ captures historical load patterns and temporal features, $F_t \in \mathbb{R}^{R}$ is the predicted task arrival for the next time slot obtained from a pre-trained demand predictor, and $A_{t-1} \in [0,1]^{R \times R}$ is the allocation matrix from the previous time slot.

The demand predictor is trained offline using historical data to forecast future task arrivals. Given historical features $\{U_{t-k}, Q_{t-k}, H_{t-k}\}_{k=1}^K$ for the past $K$ time slots, the predictor estimates $F_t = \text{Predictor}(U_{t-K:t}, Q_{t-K:t}, H_{t-K:t})$, where $F_t^{(r)}$ represents the predicted number of tasks arriving in region $r$ during time slot $t+1$. And the demand predictor is implemented as a neural network trained offline using historical data.

The action space consists of allocation matrices $A_t \in [0,1]^{R \times R}$ where $A_t^{(i,j)}$ represents the proportion of tasks from region $i$ that should be allocated to region $j$. The action must satisfy the normalization constraint $\sum_{j=1}^R A_t^{(i,j)} = 1$ for all source regions $i \in \{1, 2, \ldots, R\}$, ensuring that all tasks from each region are allocated.

The reward function is designed to balance three objectives: alignment with the optimal transport solution, temporal smoothness, and load stability. The reward at time slot $t$ is formulated as:
\begin{equation}
\begin{aligned}
r_t = r_{OT}(A_t, P^*_t) + \lambda_1 \cdot r_{smooth}(A_t, A_{t-1}) + \lambda_2  \cdot r_{cost}(Q_t)
\end{aligned}
\end{equation}
where $r_{OT}(A_t, P^*_t) = -\|A_t - P^*_t\|_F^2$ measures the alignment between the RL allocation matrix and the optimal transport solution, encouraging the agent to respect the theoretical optimality while allowing adaptive deviations.

The smoothness term $r_{smooth}(A_t, A_{t-1}) = -\|A_t - A_{t-1}\|_F^2$ penalizes large changes between consecutive allocation matrices, promoting temporal continuity and preventing abrupt load redistributions that can cause system instability. The cost term $r_{\text{cost}}(Q_t) = -\frac{\|Q_t\|_1}{Q_{\text{max}}}$ utilizes the aggregate backlog across regional servers as a real-time proxy for system responsiveness, where $Q_t$ represents the vector of pending task counts and $Q_{\text{max}}$ denotes the maximum tolerable queue capacity. The weights $\lambda_1$ and $\lambda_2$ are set to prioritize OT alignment while ensuring sufficient temporal smoothness and load balance, with the relative weighting empirically tuned to achieve stable convergence during training.

The policy network $\pi_\theta: \mathcal{S} \rightarrow \mathcal{A}$ is parameterized by $\theta$ and trained using Proximal Policy Optimization (PPO). The network outputs the parameters of a Beta distribution for each element of the allocation matrix, ensuring that the generated probabilities lie within $[0,1]$ and can be appropriately normalized. The policy gradient objective incorporates the temporal smoothness directly into the learning process:
\begin{equation}
\begin{aligned}
L(\theta) = \mathbb{E}_{s_t, a_t \sim \pi_\theta} \Bigg[ 
    & \min\bigg( \frac{\pi_\theta(a_t|s_t)}{\pi_{\theta_{old}}(a_t|s_t)} \hat{A}_t, \\
    & \text{clip}\left(\frac{\pi_\theta(a_t|s_t)}{\pi_{\theta_{old}}(a_t|s_t)}, 1-\epsilon, 1+\epsilon\right) \hat{A}_t \bigg) \Bigg]
\end{aligned}
\end{equation}
where $\hat{A}_t$ is the advantage estimate computed using the temporal reward structure, and $\epsilon$ is the clipping parameter. The advantage estimation incorporates the smoothness penalty to encourage policies that maintain allocation stability across time slots.

To achieve provable performance superiority over reactive approaches (which focus on single-timeslot optimization) in multi-timeslot joint optimization scenarios, we enhance the standard PPO training objective with carefully designed constraint terms. Our modified loss function is formulated as:
\begin{equation}
\begin{aligned}
\mathcal{L}_{\text{total}} = \mathcal{L}_{\text{PPO}} + \gamma \cdot \mathcal{L}_{\epsilon}(s, \epsilon) + \delta \cdot \mathcal{L}_{s}(s, \epsilon)
\end{aligned}
\end{equation}
where $\mathcal{L}_{\epsilon}(s, \epsilon) = \max(0, \frac{\|A_t^{\text{RL}} - A_t^{\text{OT}}\|_F - \epsilon_{\max}}{\epsilon_0})$ controls the maximum deviation from the optimal transport baseline, and $\mathcal{L}_{s}(s, \epsilon) = \max(0, \frac{s_{\min} - s_{\text{current}}}{s_0})$ ensures sufficient improvement in switching cost reduction.

This design ensures that TORTA operates within the theoretical trade-off region where reduced switching costs outweigh single-timeslot optimality losses. The constraint parameters $\epsilon$ and $s$ must satisfy a specific relationship to guarantee theoretical performance advantages over any reactive method: $\frac{1-1/s}{\epsilon} > \frac{L_R + \beta L_P}{K_0}$, where $L_R$, $L_P$ are Lipschitz constants of response time and power cost functions, and $K_0$ is the baseline switching cost of reactive methods. The rigorous mathematical derivation of this relationship, detailed training procedures, and the formal proof are provided in Appendix.

\subsection{Micro-Level Allocation}
\label{subsec: micro}
The micro-level allocation layer operates within each regional cluster to assign specific GPU inference tasks to appropriate servers. This layer receives the regional allocation decisions from the macro layer and focuses on optimizing task-server matching based on hardware compatibility, current load conditions, and anticipated future demands. The micro-level optimization addresses two key decisions: determining the optimal number of active servers and selecting the most suitable server for each incoming task.

\subsubsection{Dynamic Server Activation Strategy}

The number of active servers in each region is dynamically adjusted based on current workload and predicted future demands. Let $N_{active}^{(r,t)}$ denote the number of active servers in region $r$ at time slot $t$, where $1 \leq N_{active}^{(r,t)} \leq S_r$. The server activation decision balances immediate processing capacity with operational costs and transition overheads.

The activation strategy considers three factors: current queue length $Q_t^{(r)}$, predicted incoming tasks $F_t^{(r)}$ from the demand predictor, and the warm-up costs associated with transitioning servers from idle to active states. The target number of active servers is computed as:

\begin{equation}
\label{eq: activate}
\begin{aligned}
    N_{target}^{(r,t)} = \min\left( S_r, \left\lceil \frac{Q_t^{(r)} + F_t^{(r)} + \sigma \cdot \sqrt{F_t^{(r)}}}{C_{avg}^{(r)}} \right\rceil \right)
\end{aligned}
\end{equation}
where $C_{avg}^{(r)}$ represents the average processing capacity per server in region $r$, and $\sigma$ is a safety factor that accounts for load variability. The square root term provides a buffer proportional to the standard deviation of the predicted load, following queueing theory principles.

To minimize transition costs, servers are activated or deactivated gradually. If $N_{target}^{(r,t)} > N_{active}^{(r,t-1)}$, additional servers are selected for activation based on their warm-up time and hardware compatibility with incoming tasks. Conversely, when $N_{target}^{(r,t)} < N_{active}^{(r,t-1)}$, servers with the lowest current utilization and longest idle periods are prioritized for deactivation to minimize wasted warm-up investments.

\subsubsection{Greedy Task-Server Matching}

For the active servers, we employ a greedy algorithm that iteratively assigns each task to the server with the highest compatibility score. The score {\small$ Score(task_i, server_s)$} captures multiple dimensions of task-server affinity:
\begin{equation}
\begin{aligned}
&\mathbf{Score} (task_i, server_s)= w_1 \cdot Comp_{hw}(i,s) \\&+ w_2 \cdot Comp_{load}(i,s) + w_3 \cdot Comp_{locality}(i,s) 
\end{aligned}
\end{equation}

The hardware compatibility term $Comp_{hw}(i,s)$ measures how well the task's computational and memory requirements match the server's capabilities. For compute-intensive tasks requiring high throughput, this term favors servers with powerful GPUs like A100 or H100. For memory-intensive tasks involving large model inference, it prioritizes servers with sufficient GPU memory. Formally:
\begin{small}
\begin{equation}
\label{eq: requre}
\begin{aligned}
Comp_{hw}(i,s) =& \min\left(1, \frac{GPU_{compute}^{(s)}}{task_{compute}^{(i)}}\right) \cdot \min\left(1, \frac{GPU_{memory}^{(s)}}{task_{memory}^{(i)}}\right) \\&  \cdot Type_{match}(i,s)
\end{aligned}
\end{equation}
\end{small}
where $Type_{match}(i,s) \in \{0.5, 1.0\}$ indicates whether the server's GPU architecture is optimal for the task type (e.g., A100 for training workloads, RTX series for inference tasks).

The load compatibility term $Comp_{load}(i,s)$ considers the server's current utilization and queue length to promote load balancing:
\begin{small}
\begin{equation}
\label{eq: load}
\begin{aligned}
Comp_{load}(i,s) = \exp\left(-\frac{Util_{current}^{(s)} + Queue_{length}^{(s)}}{Capacity^{(s)}}\right)
\end{aligned}
\end{equation}
\end{small}

This exponential decay function heavily penalizes overloaded servers while providing reasonable scores for moderately loaded ones.

The locality term $Comp_{locality}(i,s)$ accounts for data locality and network proximity. Tasks that require access to cached model weights or intermediate results benefit from being assigned to servers that have recently processed similar tasks:
\begin{small}
\begin{equation}
\label{eq: similar}
\begin{aligned}
Comp_{\text{locality}}(i,s) = \sum_{j \in \text{Recent}_{\text{tasks}}^{(s)}} \frac{\text{Similarity}(\text{task}_i, \text{task}_j)}{\exp(\lambda \cdot (t - \text{timestamp}_j))}
\end{aligned}
\end{equation}
\end{small}
where $Recent_{tasks}^{(s)}$ represents tasks recently processed by server $s$, and $Similarity(\text{task}_i, \text{task}_j) = w_m \cdot \mathbb{I}[\text{model}_i = \text{model}_j] + w_c \cdot \cos(\text{embed}_i, \text{embed}_j)$ measures task similarity based on model type matching and input embedding cosine similarity, with weights $w_m, w_c$.

The greedy assignment algorithm processes tasks in order of urgency (deadline proximity) and computational requirements (resource-intensive tasks first). For each task $i$, it computes the compatibility scores with all available servers and assigns the task to the server with the highest score, provided that the server has sufficient remaining capacity. The algorithm maintains running estimates of server utilization and queue lengths, updating these estimates after each assignment to ensure that subsequent decisions reflect the current allocation state.

To handle situations where no server has sufficient capacity for a task, the algorithm implements a task buffering mechanism that temporarily queues the task until capacity becomes available. The buffering strategy prioritizes tasks based on deadline urgency and can trigger additional server activations if the buffer reaches capacity thresholds.

\begin{algorithm}[t]
    \small
    \setstretch{1}
    \SetKwInOut{Input}{Input}
    \SetKwInOut{Output}{Output}
    \caption{TORTA LOGIC}
    \label{alg: torta}
    \Input{System state $s_t$, Tasks $\mathcal{N}_t$, Cost matrix $C$}
    \Output{Task-server allocation $X_t$, System state $s_{t+1}$}
    \BlankLine
    \tcp{\color{blue}\small{\textbf{Phase 1: Macro-Level} Allocation}}
    \tcp{\color{blue}\small{Demand and Supply Normalization}}
    $\mu_t, \nu_t \leftarrow$ Normalize request and resource distributions\;
        \BlankLine
    \tcp{\color{blue}\small{Optimal Transport}}
    Solve OT problem: $P^* \leftarrow \arg\min_{P \geq 0} \sum_{i,j} C_{i,j} P_{i,j}$\;
    $Prob_{i \rightarrow j} \leftarrow$ Row-normalize $P^*$\;
        \BlankLine
    \tcp{\color{blue}\small{Reinforcement Learning Optimization}}
    $F_t \leftarrow$ Predictor$(U_{t-K:t}, Q_{t-K:t}, H_{t-K:t})$\;
    $A_t \leftarrow \pi_\theta(s_t, F_t, A_{t-1})$\;
        \BlankLine
    \tcp{\color{blue}\small{Regional Task Distribution}}
    \For{each task $i \in \mathcal{N}_t$}{
        $region_i \leftarrow$ Sample from $A_t[origin(i), :]$\;
        Add task $i$ to regional queue $\mathcal{T}_{region_i}$\;
    }
    \BlankLine
    \tcp{\color{blue}\small{\textbf{Phase 2: Micro-Level} Server Selection}}
    \BlankLine
    \For{each region $r = 1$ \textbf{to} $R$}{
        \tcp{\color{blue}\small{Dynamic Server Activation}}
        $N_{target}^{(r,t)} \leftarrow$ Compute target active servers\;
        Update active server set $\mathcal{S}_r^{active}$ based on $N_{target}^{(r,t)}$\;
        \BlankLine
        \tcp{\color{blue}\small{Task-Server Matching}}
        $\mathcal{T}_r \leftarrow$ Sort regional tasks by urgency (deadline proximity)\;
        
        \For{each task $i \in \mathcal{T}_r$}{
            \For{each active server $s \in \mathcal{S}_r^{active}$}{
                $Score(i,s) \leftarrow w_1 \cdot Comp_{hw}(i,s) + w_2 \cdot Comp_{load}(i,s) + w_3 \cdot Comp_{locality}(i,s)$\;
            }
            $s^* \leftarrow \arg\max_s Score(i,s)$\;
            $X_{i,r,s^*}^t \leftarrow 1$\;
            $Util_{s^*} \leftarrow Util_{s^*} + ResourceReq_i$\;
            $Queue_{s^*} \leftarrow Queue_{s^*} + 1$\;
        }
    } 
    \BlankLine
    \tcp{\color{blue}\small{System State Update}}
    Update utilization matrix $U_{t+1}$, queue lengths $Q_{t+1}$\;
    $A_{t-1} \leftarrow A_t$
    \Return{$X_t$, $s_{t+1}$}
\end{algorithm}
\section{Experiments}
\label{sec: exp}
We conduct comprehensive experiments to evaluate the performance of our proposed TORTA scheme under realistic distributed GPU inference scenarios. The experimental configuration is designed to simulate real-world conditions while maintaining controlled variables for fair comparison across different methods.
\subsection{Experimental Setup}

\textit{\textbf{System Configuration.}}
The simulation runs for a total duration of 6 hours, divided into 480 discrete time slots with each time slot lasting 45 seconds. This time granularity provides sufficient resolution to capture dynamic load patterns while maintaining computational tractability for the optimization algorithms. Each GPU inference task has a processing time requirement that follows a uniform distribution, reflecting the heterogeneous nature of real-world inference workloads ranging from lightweight image classification to complex natural language processing tasks.

\textit{\textbf{Network Topology and Infrastructure.}}
We evaluate our scheme across four distinct network topologies that represent different scales and geographical distributions of GPU infrastructure. The topologies range from small-scale regional deployments to large-scale global distributions (Table~\subfigref{tab:infrastructure}{a}), enabling us to assess scalability and performance across various infrastructure configurations.

Each server cluster contains heterogeneous GPU resources with different capabilities and capacities (see Table~\subfigref{tab:infrastructure}{b}). The infrastructure setup is designed to reflect real-world deployment scenarios where different GPU types are optimized for specific workload characteristics.

\textit{\textbf{Cost Model.}}
To provide realistic evaluation conditions, we incorporate real-world electricity costs based on global electricity pricing data. The power consumption and associated costs~\cite{worldpopulationreview2025electricity} vary significantly across geographical regions, reflecting actual operational expenses that cloud service providers face. This cost model enables us to evaluate not only performance metrics but also the economic efficiency of different allocation strategies under realistic pricing constraints.

\textit{\textbf{Baselines.}} To comprehensively evaluate the effectiveness of our proposed scheme, we compare it against several representative baseline methods.
\textbf{\large \texttt{SkyLB:}} This method deploys local load balancers in each region that prioritize local request processing to minimize inter-region communication overhead. When a region reaches full capacity, requests are forwarded to load balancers in other regions with available resources. The system employs a prefix tree structure to route requests from the same user to fixed replicas whenever possible, maintaining session affinity and leveraging cache locality effects~\cite{xia2025skylb}. 
\texttt{\large \textbf{SDIB}} \textit{(Standard Deviation and Idle-time Balanced)}: This approach optimizes for two primary objectives: minimizing the standard deviation of load distribution across servers and reducing the mean idle time of GPU resources. The implementation methodology follows the principles described in MERL-LB~\cite{MERL_Yang_2024}, adapting the multi-objective optimization framework to balance load variance and resource utilization efficiency.
\texttt{\large \textbf{RR}} \textit{(Round-Robin)}: We implement a round-robin scheduling algorithm as a fundamental baseline that provides a simple yet fair allocation strategy. This method applies round-robin selection for both regional GPU resource allocation and intra-cluster server assignment while maintaining necessary capacity and compatibility constraints. The round-robin approach serves as a performance lower bound and demonstrates the benefits of more sophisticated optimization strategies.

It is important to note that the original system scenarios addressed by SkyLB and MERL-LB differ from our distributed GPU inference setting. We have carefully adapted and implemented these algorithms to operate within our simulation environment, preserving their core algorithmic principles while adjusting them to handle the specific characteristics of GPU workload allocation, temporal dependencies, and heterogeneous hardware configurations present in our problem domain.

\subsection{Evaluation Metrics}

We employ three primary metrics to comprehensively assess the performance of our hierarchical allocation scheme, corresponding directly to the three components of our optimization objective:

\textbf{Response Time:} The average response time measures the end-to-end latency experienced by each inference task, calculated from the moment a request is submitted until the processed result is returned to the user. This metric encompasses three components: (1) network delay time for request routing and result transmission, (2) waiting time in queue before processing begins, and (3) actual inference execution time on the assigned GPU server. Response time directly reflects user experience quality and serves as the primary performance indicator for latency-sensitive GPU inference applications.

\textbf{Load Balancing:} We quantify load distribution across the system using inverse of the coefficient of variation $(1/\text{CV})$ of server utilization rates. This metric captures how evenly computational workload is distributed across available GPU resources, with larger values indicating better load balancing. Effective load balancing prevents resource hotspots, ensures efficient utilization of the entire infrastructure, and contributes to overall system stability and predictable performance.

\textbf{Total Cost:} The aggregated cost encompassing both operational expenses and switching overhead incurred during the experimental period. This includes electricity costs calculated based on regional electricity pricing, as well as switching costs measured as allocation differences between consecutive timeslots. This comprehensive cost metric evaluates the economic efficiency of different allocation strategies and their ability to leverage both geographical cost differences and temporal smoothness for overall optimization.
\begin{table}[h]
    \fontsize{8pt}{8pt}\selectfont
    \centering
    \caption{Experimental Infrastructure Configuration}
    \begin{subtable}{0.47\linewidth}  
        \centering
        \renewcommand{\arraystretch}{1.3}
        {\scriptsize (a) Topology Characteristics~\cite{orlowski2010sndlib} \par}
        \begin{tabular}{@{}p{0.18\linewidth}p{0.10\linewidth}p{0.20\linewidth}p{0.20\linewidth}@{}}
        \toprule
        \textbf{Topo.} & \textbf{Node} & \textbf{\makecell{B/W.\\(Gbps)}} & \textbf{\makecell{Lat.\\ (ms)}} \\
        \midrule
        Abilene & 12  & 10 & 25\\
        Polska & 12 & 10 & 45\\
        Gabriel & 25 & 15 & 80\\
        Cost2 & 32 & 20 & 150\\
        \bottomrule
        \end{tabular}
    \end{subtable}
    \hspace{1pt}
    \begin{subtable}{0.5\linewidth}  
        \centering
        \renewcommand{\arraystretch}{1.2}
        {\scriptsize (b) GPU Types and Task Categories \par}
        \begin{tabular}{@{}p{0.13\linewidth}p{0.15\linewidth}p{0.45\linewidth}@{}}
        \toprule
        \textbf{GPU} & \textbf{Count} & \textbf{Task Type} \\
        \midrule
        A100 & 40-60 & Compute-Int. \\
        H100 & 20-40 & Compute-Int. \\
        4090 & 40-60 & Lightweight \\
        V100 & 60-80 & Memory-Int. \\
        T4 & 40-60 & Lightweight \\
        \bottomrule
        \end{tabular}
    \end{subtable}
    \label{tab:infrastructure}
\end{table}

\subsection{Experimental Analysis}
\subsubsection{Response Time Performance Analysis}
\begin{figure*}[t]
\centering
\includegraphics[width=\textwidth]{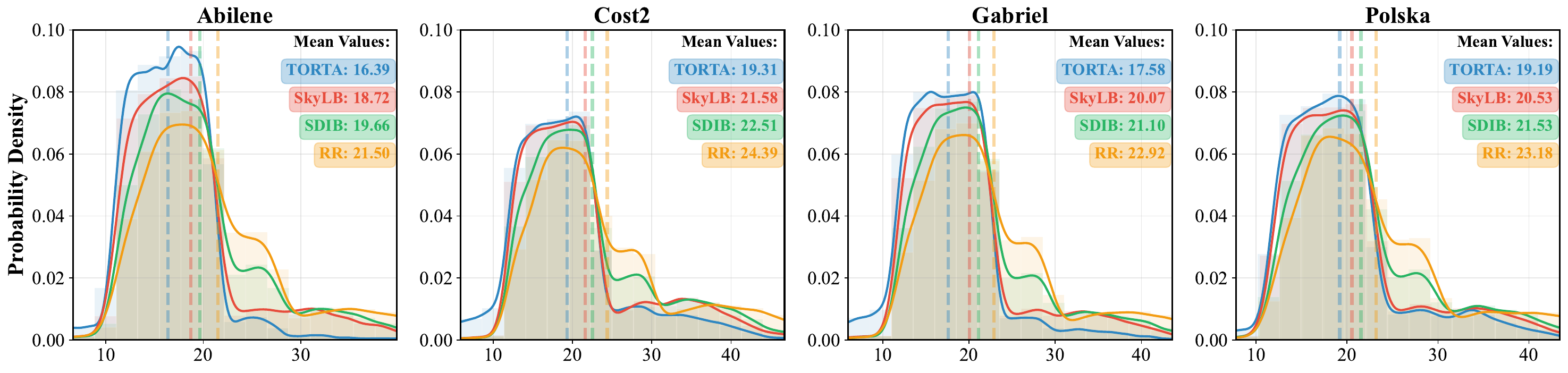}
\caption{Response time probability distributions across different network topologies. The shaded regions represent the core distribution range, with vertical dashed lines indicating mean values for each algorithm.}
\label{fig:response_time}
\end{figure*}
\begin{figure*}[t]
\centering
\includegraphics[width=\textwidth]{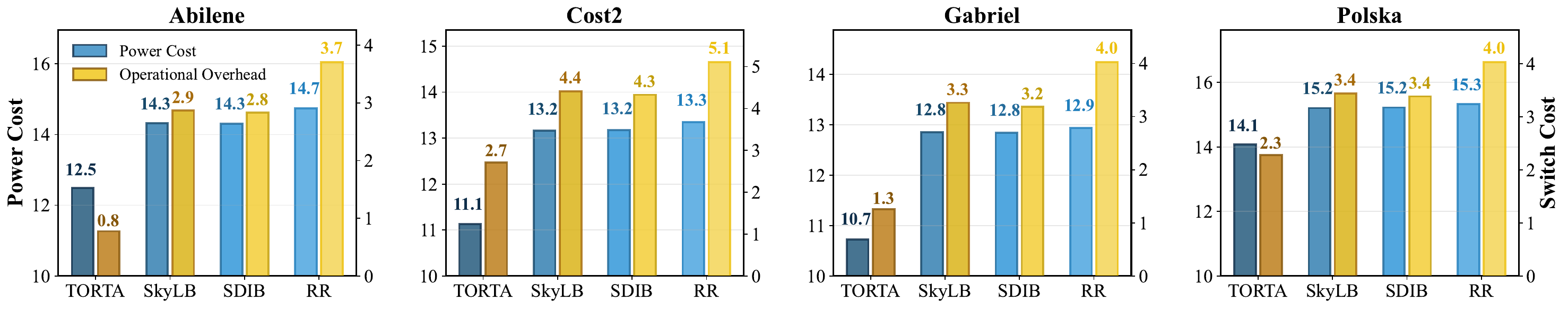}
\caption{Power cost and operational overhead comparison across different network topologies. Power costs are measured in thousands of dollars (\$K), while operational overhead represent the relative overhead (normalized units).}
\label{fig:cost_comparison}
\end{figure*}
The response time distributions shown in Figure~\ref{fig:response_time} demonstrate the effectiveness of our TORTA scheme across four different network topologies. TORTA consistently achieves the fastest average task completion times, with mean response times of 16.39s, 19.31s, 17.58s, and 19.19s for Abilene, Cost2, Gabriel, and Polska topologies respectively. This represents substantial improvements of 12.4\% (vs. SkyLB: 18.72s), 10.5\% (vs. SkyLB: 21.58s), 12.4\% (vs. SkyLB: 20.07s), and 6.5\% (vs. SkyLB: 20.53s) compared to the best-performing baseline method across each topology.

A detailed examination of the distribution tails reveals that TORTA performs particularly well for high-latency tasks, effectively reducing task queuing times through intelligent scheduling decisions. This improvement is especially pronounced in the right tail of the distributions, where TORTA shows substantially lower probability density for long response times compared to other methods. The temporal-aware optimization enables the system to proactively manage resource allocation, preventing the accumulation of tasks in queues that typically leads to extended completion times.

The performance gap between TORTA and baseline methods is notably smaller on the Polska topology compared to other network configurations. This phenomenon can be attributed to the superior connectivity characteristics of the Polska topology, which provides more routing options and reduces the impact of suboptimal scheduling decisions. When the underlying network infrastructure offers better connectivity, the advantages of sophisticated temporal optimization become less pronounced, as even simpler algorithms can achieve reasonable performance through the network's inherent redundancy.

\subsubsection{Cost Optimization and Economic Efficiency} 
Figure \ref{fig:cost_comparison} presents a comprehensive cost analysis comparing TORTA with baseline methods across four network topologies. TORTA consistently achieves the lowest power costs across all topologies: 12.5K (vs. SkyLB's 14.3K, 12.6\% reduction), 11.1K (vs. SkyLB's 13.2K, 15.9\% reduction), 10.7K (vs. SkyLB's 12.8K, 16.4\% reduction), and 14.1K (vs. SkyLB's 15.2K, 7.2\% reduction) for Abilene, Cost2, Gabriel, and Polska topologies respectively. Compared to other baselines, TORTA achieves similar improvements of 12-16\% over SDIB and RR methods.

The optimal transport component enables TORTA to leverage geographical cost variations by routing tasks to regions with lower electricity prices while maintaining network latency constraints. This cost-aware allocation strategy is particularly effective in Cost2 and Gabriel topologies, where TORTA achieves the most substantial power cost reductions of 15.9\% and 16.4\% respectively.

For operational overhead, TORTA demonstrates dramatic improvements: 0.8 vs. 2.9 (72\% reduction), 2.7 vs. 4.4 (39\% reduction), 1.3 vs. 3.3 (61\% reduction), and 2.3 vs. 3.4 (32\% reduction) compared to SkyLB across the four topologies. Compared to other baselines, TORTA achieves 71-79\% reductions vs. SDIB and 73-78\% reductions vs. RR methods. The operational overhead metric represents normalized system transition costs (measured in planning units) that encompasses actual model loading, memory management, and server state changes, extending beyond the theoretical switching cost used in our analysis (which measures allocation matrix differences as a proxy). TORTA's micro-level server selection employs proactive preheating and locality-aware assignment, while SkyLB and SDIB use cache-aware strategies and RR lacks such optimizations, explaining the varied overhead levels across methods.

\begin{figure*}[t]
\centering
\includegraphics[width=\textwidth]{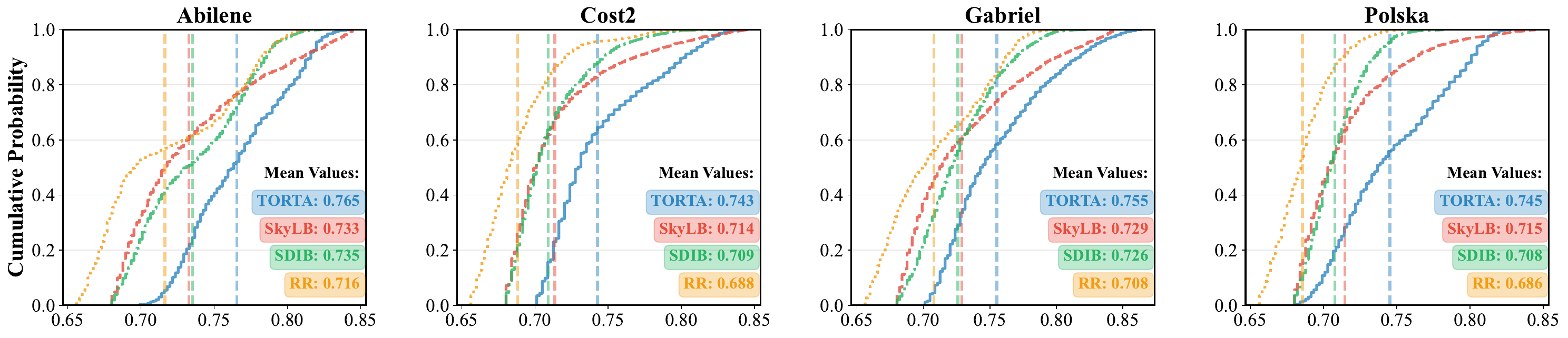}
\caption{Cumulative distribution function (CDF) of load balance coefficients across different network topologies. The load balance coefficient is calculated as the inverse of one plus the coefficient of variation of server utilization rates. Higher values indicate better load distribution. Vertical dashed lines represent mean values for each algorithm.}
\label{fig:load_balance_cdf}
\vspace{-0.1cm}
\end{figure*}
\subsubsection{Load Balance} Figure \ref{fig:load_balance_cdf} demonstrates the superior load balancing performance of TORTA across all tested network topologies. TORTA consistently achieves the highest load balance coefficients with mean values of 0.765, 0.743, 0.755, and 0.745 for Abilene, Cost2, Gabriel, and Polska topologies respectively. Compared to the best-performing baseline SkyLB, this represents improvements of 4.4\% (0.765 vs 0.733), 4.1\% (0.743 vs 0.714), 3.6\% (0.755 vs 0.729), and 4.2\% (0.745 vs 0.715) across the four topologies. The improvements over other baselines are more substantial: 4.1-6.9\% vs SDIB and 6.8-11.5\% vs RR methods. The load balance coefficient $LB$ is defined as:
\begin{equation}
    \begin{aligned}
    LB = \frac{1}{1 + CV_{util}}  = \frac{1}{1 + \frac{\sqrt{\frac{1}{N}\sum_{i=1}^{N}(U_i - \bar{U})^2}}{\frac{1}{N}\sum_{i=1}^{N}U_i}}
    \end{aligned}
\end{equation}
where $U_i$ represents the utilization rate of server $i$, $\bar{U}$ is the mean utilization across all servers, $N$ is the total number of active servers, and $CV_{util}$ is the coefficient of variation. These results represent substantial improvements over baseline methods, indicating more uniform resource utilization across the distributed infrastructure.

The enhanced load balancing performance can be attributed to TORTA's smooth allocation strategy at the macro level, which prevents dramatic load variations between consecutive time slots within server clusters. The temporal smoothness term in our reinforcement learning objective function explicitly penalizes large allocation changes, ensuring gradual transitions that avoid sudden spikes in server workloads. This temporal consistency is particularly important for maintaining stable system performance and preventing cascading effects that can lead to load imbalances.

At the micro level, TORTA employs dynamic server activation through Equation~\ref{eq: activate} and load-aware server selection via Equation~\ref{eq: load} to maintain balanced utilization across active servers. The exponential decay function in the load compatibility term heavily penalizes assignments to already overloaded servers, naturally distributing tasks toward underutilized resources. Additionally, the dynamic activation strategy ensures that the number of active servers adapts to current demand, preventing situations where a small number of servers handle disproportionate workloads while others remain idle.

\subsubsection{Details about Performance Metrics}
\begin{figure}[t]
\centering
\includegraphics[width=\linewidth]{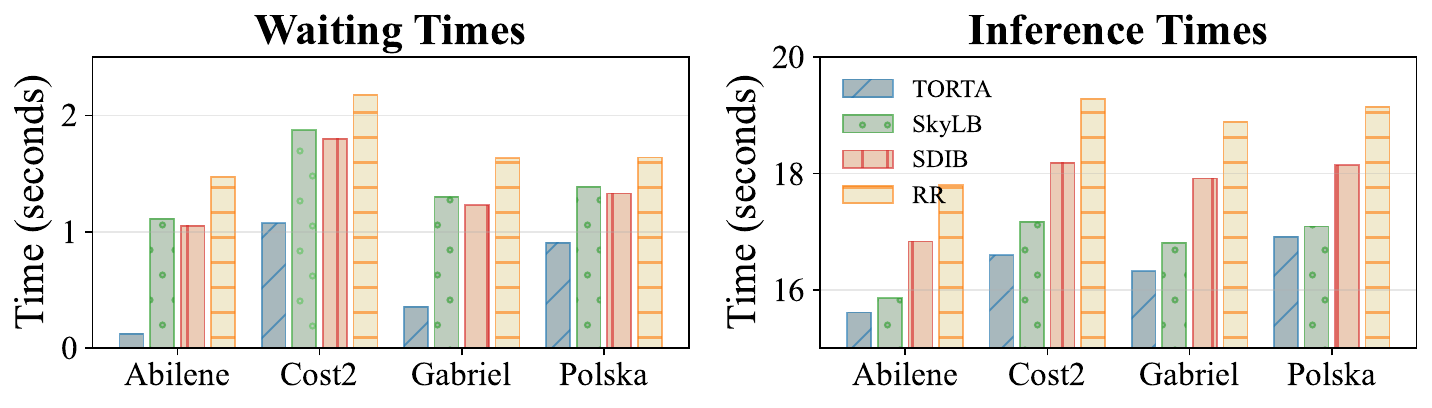}
\caption{Breakdown of response time components across different network topologies.}
\label{fig:response_time_breakdown}
\end{figure}

Figure \ref{fig:response_time_breakdown} provides a detailed breakdown of response time components, showing that TORTA achieves superior performance in both waiting times and inference times across all network topologies. The waiting times for TORTA range from 0.3 to 1.1 seconds compared to 1.2-2.4 seconds for baseline methods, representing reductions of approximately 50-75\%. This improvement stems from TORTA's proactive resource management and temporal-aware allocation strategy that prevents task accumulation.

Figure \ref{fig:prediction_accuracy_impact} illustrates the relationship between demand prediction accuracy and system performance for TORTA compared to baseline methods. Since the baseline methods (RR, SkyLB, and SDIB) do not incorporate prediction modules, their performance remains constant across different accuracy levels, as shown by the horizontal lines. In contrast, TORTA demonstrates clear performance improvements with increased prediction accuracy, with response times decreasing from approximately 20.5 seconds at 0.1 accuracy to 17.5 seconds at 0.9 accuracy. The shaded region around TORTA's curve represents the confidence interval derived from multiple experimental runs.

A critical observation is that TORTA begins to outperform all baseline methods when prediction accuracy reaches 0.5, which represents a relatively achievable forecasting requirement. The prediction accuracy $PA$ is defined as:
\begin{equation}
    \begin{aligned}
        PA = \exp\left(-\frac{1}{T}\sum_{t=1}^{T}\frac{|F_t^{pred} - F_t^{actual}|}{F_t^{actual} + \epsilon}\right)
    \end{aligned}
    \label{eq: pa}
\end{equation}
where $F_t^{pred}$ represents the predicted traffic volume at time $t$, $F_t^{actual}$ is the actual observed traffic, $T$ is the total number of time slots, and $\epsilon$ is a small constant to prevent division by zero. This metric ranges from 0 to 1, with higher values indicating more accurate predictions.
\begin{figure}[t]
\centering
\includegraphics[width=\linewidth]{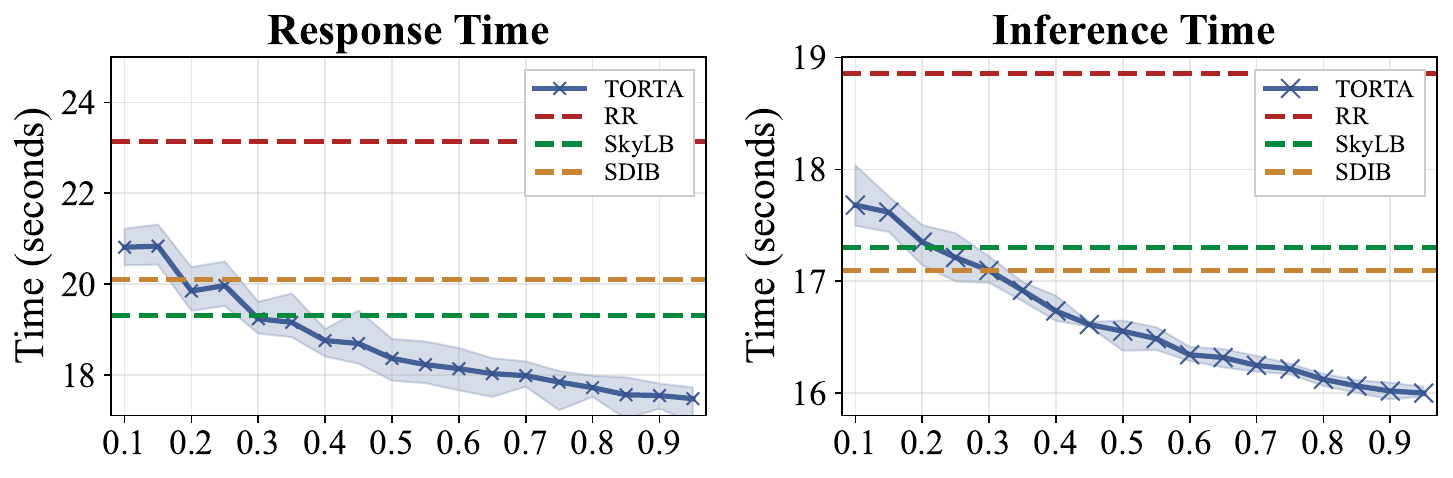}
\caption{Impact of demand prediction accuracy on system performance. Left panel shows response times and right panel shows inference times as functions of prediction accuracy. The prediction accuracy is defined as Equation~\ref{eq: pa}.}
\label{fig:prediction_accuracy_impact}
\vspace{-0.5cm}
\end{figure}

This threshold (0.4) suggests that the prediction module does not need to achieve exceptional accuracy to provide meaningful benefits. The 0.4 accuracy level corresponds to reasonable demand forecasting capabilities that can be obtained using standard time series prediction techniques, making TORTA practical for real-world deployment scenarios.

Even when prediction accuracy falls below the 0.4 threshold, TORTA's performance degradation is gradual rather than catastrophic, indicating the robustness of the two-stage allocation scheme. The micro-level server selection component effectively compensates for macro-level prediction inaccuracies through real-time adaptation and intelligent task-server matching. This resilience ensures that TORTA remains competitive even when demand forecasting is challenging, while providing substantial performance gains when reasonable prediction accuracy is available.

\section{Related Work}
\label{sec: related_works}
The optimization of large language model inference has primarily focused on addressing the fundamental mismatch between the compute-intensive prefill phase, which processes all input tokens in parallel, and the subsequent memory-intensive decode phase, which generates tokens sequentially~\cite{donisch2024inferenceopt-llm}. Several approaches have emerged to tackle this challenge through various architectural and algorithmic innovations. Methods such as speculative decoding \cite{xia2025speculativedecoding, hu2025speculativedecodingbeyondindepth, marzollo2024sssdsimplyscalablespeculativedecoding} attempt to predict multiple tokens simultaneously to reduce the sequential bottleneck, while attention optimization techniques~\cite{pan2024instinferinstorageattentionoffloading, ribar2024sparqattentionbandwidthefficientllm, ye2025flashinferefficientcustomizableattention} focus on improving memory efficiency during the attention computation. Model compression approaches~\cite{xie2025reimaginingcompresssionmemoryaccessllm, hansenpalmus2024communicationcompression, xu2023compresspromptimprovingaccuracyefficiency} reduce computational requirements through quantization and pruning, though they may compromise model accuracy. Some works target Mixture-of-Experts (MoE) models, aiming to achieve expert load balancing~\cite{hwang2024pre, hwang2023tutel} and faster inter-expert communication~\cite{deepspeed2022, li2024optimizing}. These acceleration techniques primarily operate at the algorithmic level within individual inference engines, complementing but not replacing the need for effective resource allocation across distributed systems.

LLM inference scheduling encompasses both local scheduling within individual servers and distributed scheduling across clusters. At the engine level, traditional approaches have evolved from simple first-come-first-served strategies~\cite{Xie2025NVIDIA-fcfs} to sophisticated predictive methods. Non-preemptive scheduling methods \cite{yu2022non-preempitve} pre-allocate maximum possible memory based on context length limits, often leading to significant memory underutilization when requests complete early. Predictive scheduling approaches~\cite{zheng2023-pia, shahout2024dontstopnowembedding-trail, fu2024efficientllmschedulinglearning-ltr, choi2025elisefficientllmiterative-elis}  leverage various prediction mechanisms to estimate response lengths and optimize task ordering, though these methods differ fundamentally from our temporal demand forecasting as they focus on individual request characteristics rather than system-wide load patterns over time. For instance, systems like PiA~\cite{zheng2023-pia} use the LLM itself for length estimation, while TRAIL~\cite{shahout2024dontstopnowembedding-trail} implements shortest predicted remaining processing time strategies. Dynamic scheduling frameworks~\cite{lin2025bulletboostinggpuutilization-bullet, kim2025optimizingllminferencedatabase-infermax, wu2024fastdistributedinferenceserving-fastserve, kossmann2025gpuhalfemptyhalffullpractical-larry} emphasize real-time adaptation and responsive resource management, with systems like Bullet~\cite{lin2025bulletboostinggpuutilization-bullet} addressing prefill/decode mismatches through concurrent execution and InferMax \cite{kim2025optimizingllminferencedatabase-infermax} employing cost models for batch time prediction. Systems like PETALS \cite{borzunov2023-petals, Choudhury2024mast, dong2025singleaiclustersurvey-geodist} address fault-tolerant inference across distributed commodity hardware, and provide global scheduling for large-scale training across data centers. However, these approaches remain fundamentally reactive, optimizing within individual time windows without considering temporal dependencies across scheduling decisions.

\section{Conclusion}
\label{sec: conclusion}
This paper presents TORTA, a novel scheme for distributed GPU inference optimization that addresses the fundamental limitations of existing reactive scheduling approaches. By integrating reinforcement learning with optimal transport theory and incorporating temporal dependencies into the allocation process, TORTA achieves improvements in response times, load balancing, and operational overhead across diverse network topologies. The two-layer hierarchical architecture effectively decomposes the complex spatiotemporal optimization problem, enabling scalable solutions that consider both geographical resource distribution and temporal dynamics.

Experimental results demonstrate that TORTA consistently outperforms baseline methods across multiple metrics, achieving response time reductions of up to 15\% and substantial improvements in load balancing coefficients. The scheme exhibits robustness to prediction inaccuracies and maintains competitive performance even when demand forecasting accuracy is modest. However, several limitations remain: TORTA's performance gains depend on prediction quality with diminishing returns below 40\% accuracy, and the two-layer optimization introduces computational overhead for extremely large-scale deployments. Future work will focus on extending the framework to multi-tenant scenarios, investigating adaptive prediction mechanisms for improved forecasting accuracy, and developing lightweight variants for resource-constrained environments.

\appendices
\section{Theoretical Analysis of TARTO}
\label{prof: torto}
In this appendix, we provide a rigorous theoretical analysis demonstrating that TORTA can provably outperform the performance upper bound of single-timeslot reactive methods in multi-timeslot scenarios. Our proof consists of three main components: (1) establishing the performance upper bound for single-timeslot methods, (2) analyzing the switching cost convergence properties, and (3) deriving conditions under which TORTA surpasses this upper bound.

\subsection{ Upper Bound for Single-Timeslot Methods}
\label{prof: upper_bound}
We begin by establishing the theoretical upper bound for any single-timeslot optimization method when applied across multiple timeslots.

\begin{Definition}[Single-Timeslot Method]
A single-timeslot method $\mathcal{M}$ is a memoryless allocation strategy where the decision at timeslot $t$ depends only on the current state $(s_t, \mu_t, \nu_t)$, formally: $A_t^{\mathcal{M}} = f_{\mathcal{M}}(\mu_t, \nu_t, C_t)$.
\end{Definition}

\begin{Theorem}[Optimal Transport Achieves Single-Timeslot Upper Bound]
\label{thm:ot_upper_bound}
For any single timeslot $t$, the optimal transport solution $P^*$ minimizes both response time and power cost simultaneously, establishing the performance upper bound for single-timeslot optimization.
\end{Theorem}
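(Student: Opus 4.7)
The plan is to recast the single--timeslot optimization objective as a linear functional of the allocation variables and identify it, up to an allocation--independent constant, with the Kantorovich cost $\langle C, P\rangle$ that the OT solution $P^{*}$ minimizes. First I would fix an arbitrary timeslot $t$ and aggregate the server--level decisions $X_{i,r,s}^{t}$ into regional flow variables $P_{i,j}$ by summing over source tasks originating in region $i$ and destination servers $s$ inside region $j$, normalized by $N_{t}$. Under this aggregation, the network contribution $T_{\text{network}}(i,r)$ and the per--region operational cost differentials in $C_{\text{power}}^{t}$ depend linearly on $P_{i,j}$, while the compute term $T_{\text{compute}}(i,r,s)$ and queueing term $T_{\text{queue}}(r,s,t)$ can be absorbed into an allocation--independent baseline once the micro--layer has selected servers optimally inside each region under the fixed capacity marginal $\nu_{t}$.

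Next I would show that the OT cost matrix $C_{i,j} = w_{1}L_{i,j} + w_{2}\mathrm{BandwidthCost}_{i,j} + w_{3}(\mathrm{OperationalCost}_{j}-\mathrm{OperationalCost}_{i})$ admits a choice of weights for which $\langle C, P \rangle$ equals, up to the above constant, the single--timeslot restriction of the objective in Eq.~(1) (with the switching term $C_{\text{switch}}^{t}$ dropped, since a single--timeslot strategy makes no cross--timeslot commitment and the term reduces to a state--dependent constant). Feasibility of any single--timeslot method $\mathcal{M}$ is then the observation that its induced coupling $P^{\mathcal{M}}$ lies in the transport polytope: each task must be placed somewhere, giving $\sum_{j}P^{\mathcal{M}}_{i,j} = \mu_{t}^{(i)}$, and regional capacities cannot be exceeded, giving $\sum_{i}P^{\mathcal{M}}_{i,j} \le \nu_{t}^{(j)}$, which is sharpened to equality after the normalization in Section~\ref{subsec: marco}. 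By definition of $P^{*}$ as the Kantorovich minimizer, $\langle C, P^{\mathcal{M}}\rangle \ge \langle C, P^{*}\rangle$, which translates back to the claimed upper bound on the weighted single--timeslot objective.

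To upgrade this weighted--sum statement to the \emph{simultaneous} minimization of both response time and power cost asserted in the theorem, I would invoke a Pareto--style argument exploiting the separable structure $C = C_{\text{time}} + C_{\text{power}}$ induced by the $w_{1},w_{2}$ versus $w_{3}$ split. If some feasible coupling $P'$ strictly improved $P^{*}$ on one coordinate while not degrading the other, then $\langle C, P'\rangle < \langle C, P^{*}\rangle$ for every strictly positive choice of weights, contradicting the LP optimality of $P^{*}$. Hence $P^{*}$ sits on the Pareto frontier of the two--objective single--timeslot problem and simultaneously attains the componentwise lower envelope achievable by any memoryless strategy, which is precisely the ``upper bound on performance'' claimed in the statement.

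The step I expect to be the main obstacle is the linearization underlying paragraph one: the queueing term $T_{\text{queue}}(r,s,t)$ is genuinely nonlinear in the load, and $T_{\text{compute}}(i,r,s)$ depends on per--server heterogeneity, so the reduction of the per--task objective to a linear functional $\langle C, P\rangle$ on regional flows is not automatic. I would discharge this either by restricting attention to the homogeneous, unsaturated regime (where queueing is $o(1)$ and compute times depend only on the region), so that the linear Kantorovich surrogate is exact, or by appealing to the two--layer decomposition of Section~\ref{subsec: micro}: conditional on a regional flow $P_{i,j}$, the micro layer solves the intra--region assignment to a value that is a convex, monotone function of the regional load, and Jensen's inequality combined with the marginal constraints shows that its minimum over the transport polytope is attained at a vertex coinciding with the LP minimizer $P^{*}$. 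Flagging and resolving this reduction is what turns the otherwise immediate LP argument into a bona fide statement about end--to--end inference performance.
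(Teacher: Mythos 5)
Your reduction of the single--timeslot objective to the Kantorovich functional and the feasibility argument ($P^{\mathcal{M}}$ lies in the transport polytope, hence $\langle C,P^{\mathcal{M}}\rangle \ge \langle C,P^{*}\rangle$) is a reasonable and in fact cleaner route than the paper's, which instead runs an exchange argument on the max regional load $\max_j w_j/c_j$ (showing any non-load-equalizing allocation can be improved by redistributing mass between an overloaded and an underloaded region) and then asserts power-cost optimality in one line because the cost matrix contains the price differentials. However, your third paragraph contains a genuine gap that sinks the argument exactly where the theorem makes its strongest claim. Pareto optimality is not simultaneous minimization. Your exchange shows that no feasible $P'$ dominates $P^{*}$ in both coordinates, i.e., $P^{*}$ is non-dominated; it does not show that $P^{*}$ attains $\min_{P} C_{\text{time}}(P)$ and $\min_{P} C_{\text{power}}(P)$ individually. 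The ``componentwise lower envelope'' is attained by a single point only when the two objectives share a common minimizer over the transport polytope --- which is precisely the content of the theorem and cannot be extracted from the weighted-sum LP. A weighted-sum minimizer for one fixed positive weight vector is just one point on the Pareto frontier; in general the frontier is a nontrivial face and the two coordinate minima sit at different ends of it. The paper avoids this (at the price of rigor) by giving two separate arguments, one per objective; your single-LP-plus-Pareto structure cannot be patched into that form without essentially redoing both arguments.

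A secondary but still real problem is your proposed discharge of the nonlinearity via convexity: you claim that the micro-layer value, as a convex monotone function of regional load, has its minimum over the transport polytope ``attained at a vertex coinciding with the LP minimizer $P^{*}$.'' Convex functions attain their \emph{maxima} at vertices of a polytope; their minima generically lie in the relative interior of a face and need not be extreme points at all, and even when the minimizer happens to be a vertex there is no reason it is the same vertex selected by a different linear objective $\langle C,\cdot\rangle$. So the only rigorous version of your paragraph one is the restriction to the homogeneous, unsaturated regime where queueing is negligible and compute time is region-determined --- a restriction the theorem statement does not make, and which you would need to state as an explicit hypothesis.
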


\begin{Proof}
Let $w_j = \sum_{i=1}^{R} P^*_{i,j}$ denote the total workload assigned to region $j$, and $T_j = w_j/c_j$ be the response time in region $j$ with capacity $c_j$. The maximum response time is $T_{\max} = \max_{j \in \mathcal{R}} T_j$.

Suppose there exists an allocation $P'$ such that $T'_{\max} < T_{\max}$ but $P'$ is not the optimal transport solution. Since $P'$ is suboptimal for OT, there exist regions $i, j$ such that $w'_i/c_i \neq w'_j/c_j$.

Without loss of generality, assume $w'_i/c_i > w'_j/c_j$. We can construct a new allocation $\tilde{P}$ by redistributing tasks from region $i$ to region $j$ such that $\tilde{w}_i/c_i = \tilde{w}_j/c_j$. This redistribution satisfies:
\begin{align}
\sum_j \tilde{w}_j &= \sum_j w'_j \quad \text{(workload conservation)} \\
\sum_j \tilde{P}_{i,j} &= \mu_t^{(i)} \quad \text{(demand constraint)}
\end{align}

The new maximum response time satisfies:
$$\tilde{T}_{\max} = \max\left\{\frac{\tilde{w}_i}{c_i}, \frac{\tilde{w}_j}{c_j}\right\} < \max\left\{\frac{w'_i}{c_i}, \frac{w'_j}{c_j}\right\} = T'_{\max}$$

This contradicts the assumption that $P^*$ is optimal. Therefore, OT achieves the minimum possible response time.

For power cost optimality, since OT minimizes $\sum_{i,j} C_{i,j} P_{i,j}$ where $C_{i,j}$ incorporates regional electricity price differences, it naturally achieves minimum power cost. $\hfill\blacksquare$
\end{Proof}

\subsection{Switching Cost Convergence Analysis}

Next, we analyze the switching costs incurred by single-timeslot methods across multiple timeslots.

\begin{assumption}[Temporal Independence]
\label{ass:temporal_independence}
The resource distributions $\{\nu_t\}_{t=1}^T$ and request distributions $\{\mu_t\}_{t=1}^T$ are temporally independent: $\nu_t \perp \nu_{t'}$ and $\mu_t \perp \mu_{t'}$ for all $t \neq t'$.
\end{assumption}

\begin{Theorem}[Switching Cost Convergence]
\label{thm:switching_convergence}
Under Assumption \ref{ass:temporal_independence}, for any single-timeslot method $\mathcal{M}$, the expected switching cost between consecutive timeslots converges to a method-independent constant $K_0$.
\end{Theorem}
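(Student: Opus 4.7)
The plan is to exploit the iid structure induced by Assumption~\ref{ass:temporal_independence} to reduce the expected switching cost to a deterministic quantity that depends only on the distribution of $(\mu,\nu)$ and the cost matrix, not on the specific method $\mathcal{M}$. I would carry this out in three steps.

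First, I would observe that by definition a single-timeslot method produces $A_t^{\mathcal{M}} = f_{\mathcal{M}}(\mu_t,\nu_t,C_t)$ as a measurable function of the current state alone. Under Assumption~\ref{ass:temporal_independence}, and assuming a stationary cost matrix $C_t\equiv C$ (as is standard in this setting), the pairs $(\mu_t,\nu_t)$ and $(\mu_{t-1},\nu_{t-1})$ are iid, hence $A_t^{\mathcal{M}}$ and $A_{t-1}^{\mathcal{M}}$ are iid random matrices valued in the polytope of feasible allocations. Second, I would expand the Frobenius distance and factor the cross term via independence to obtain
\begin{equation}
\mathbb{E}\bigl\|A_t^{\mathcal{M}}-A_{t-1}^{\mathcal{M}}\bigr\|_F^{2} = 2\,\mathbb{E}\|A^{\mathcal{M}}\|_F^{2} - 2\bigl\|\mathbb{E}[A^{\mathcal{M}}]\bigr\|_F^{2} = 2\,\mathrm{tr}\bigl(\mathrm{Cov}(\mathrm{vec}(A^{\mathcal{M}}))\bigr).
\end{equation}
This quantity is time-invariant, so the ``convergence'' claim is obtained either trivially (the expected cost equals this constant at every $t$) or, more usefully, via the strong law of large numbers: the empirical average $\tfrac{1}{T}\sum_{t=2}^{T}\|A_t^{\mathcal{M}}-A_{t-1}^{\mathcal{M}}\|_F^{2}$ converges almost surely to the right-hand side.

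Third, to promote this method-dependent constant to a method-independent $K_0$, I would invoke Theorem~\ref{thm:ot_upper_bound}: any single-timeslot method that attains the single-timeslot performance upper bound must coincide almost surely with the optimal transport plan $P^{*}(\mu_t,\nu_t)$, provided the cost $C$ is in general position so that $P^{*}$ is unique. Hence $A^{\mathcal{M}}\stackrel{d}{=}P^{*}(\mu,\nu)$ for every optimal single-timeslot method, and
\begin{equation}
K_0 := 2\,\mathrm{tr}\bigl(\mathrm{Cov}(\mathrm{vec}(P^{*}(\mu,\nu)))\bigr)
\end{equation}
depends only on the joint law of $(\mu,\nu)$ and on $C$. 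For suboptimal single-timeslot methods, a short variance-decomposition argument using the fact that every feasible allocation satisfies the same row/column marginal constraints $\mu_t,\nu_t$ shows that $K_0$ is a lower bound on their expected switching cost, which still justifies treating $K_0$ as a uniform benchmark against which TORTA's multi-timeslot gains can be measured.

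The main obstacle will be the uniqueness-and-identifiability issue in the third step: when $P^{*}(\mu,\nu)$ is non-unique (as happens on a measure-zero but non-empty set of cost structures), two ``optimal'' single-timeslot methods can legitimately return different vertex solutions of the transport polytope and thus yield genuinely different constants. I plan to resolve this either by restricting to cost matrices in general position, by adding an infinitesimal entropic regularizer that makes $P^{*}$ strictly unique while preserving the limiting value, or by redefining $K_0$ as the infimum of the expected switching cost over all admissible measurable selectors of the OT correspondence and showing this infimum is attained.
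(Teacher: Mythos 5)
Your first two steps reproduce the paper's own proof almost verbatim: the paper likewise notes that the memoryless property plus Assumption~\ref{ass:temporal_independence} make $A_t^{\mathcal{M}}$ and $A_{t-1}^{\mathcal{M}}$ independent, expands $\mathbb{E}\|A_t^{\mathcal{M}}-A_{t-1}^{\mathcal{M}}\|_F^2$, factors the cross term, and arrives at $2\,\mathbb{E}\|A^{\mathcal{M}}\|_F^2-2\|\mathbb{E}[A^{\mathcal{M}}]\|_F^2=2\,\mathrm{Var}(A^{\mathcal{M}})$, which it then simply names $K_0$. Where you diverge is in your third step, and that is the substantive part of your proposal: the paper's entire justification for method-independence is one informal sentence (``the variance is primarily determined by input distribution randomness rather than the optimization algorithm itself''), which is an assertion, not a proof --- as you correctly sense, $2\,\mathrm{Var}(A^{\mathcal{M}})$ manifestly depends on the selector $f_{\mathcal{M}}$ and not only on the law of $(\mu_t,\nu_t)$. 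So you have correctly located the gap the paper glosses over.

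Two caveats about your patch, though. First, repairing the claim via Theorem~\ref{thm:ot_upper_bound} (so that $A^{\mathcal{M}}$ is distributed as $P^*(\mu,\nu)$, modulo uniqueness of the OT plan) proves a strictly weaker statement than the theorem asserts: it covers only single-timeslot methods that coincide with the OT selector, whereas the theorem claims the constant is the same for \emph{any} single-timeslot method. Second, your fallback claim that $K_0=2\,\mathrm{tr}\bigl(\mathrm{Cov}(\mathrm{vec}(P^*(\mu,\nu)))\bigr)$ lower-bounds the expected switching cost of suboptimal methods is doubtful and would need real work: among measurable selectors of the transport polytope with the prescribed marginals, the OT plan is typically a vertex and hence a discontinuous, high-variance function of $(\mu,\nu)$, while a smooth selector such as the independent coupling $P_{ij}=\mu^{(i)}\nu^{(j)}$ will generally have \emph{smaller} variance; so no uniform lower bound by the OT variance follows from the marginal constraints alone. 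In short, your computation matches the paper's, your diagnosis of the missing step is right, but neither your repair nor the paper's one-sentence appeal establishes method-independence in the generality the theorem states.
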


\begin{Proof}
Let $\Delta_t^{\mathcal{M}} = ||A_t^{\mathcal{M}} - A_{t-1}^{\mathcal{M}}||_F^2$ denote the switching cost. We have:
\begin{align}
\mathbb{E}[\Delta_t^{\mathcal{M}}] &= \mathbb{E}[||A_t^{\mathcal{M}} - A_{t-1}^{\mathcal{M}}||_F^2] \\
&= \mathbb{E}[||A_t^{\mathcal{M}}||_F^2] + \mathbb{E}[||A_{t-1}^{\mathcal{M}}||_F^2] - 2\mathbb{E}[\langle A_t^{\mathcal{M}}, A_{t-1}^{\mathcal{M}} \rangle]
\end{align}

By temporal independence and the memoryless property of $\mathcal{M}$:
$$\mathbb{E}[\langle A_t^{\mathcal{M}}, A_{t-1}^{\mathcal{M}} \rangle] = \langle \mathbb{E}[A_t^{\mathcal{M}}], \mathbb{E}[A_{t-1}^{\mathcal{M}}] \rangle = ||\mathbb{E}[A^{\mathcal{M}}]||_F^2$$

Therefore:
\begin{align}
\mathbb{E}[\Delta_t^{\mathcal{M}}] &= 2\mathbb{E}[||A^{\mathcal{M}}||_F^2] - 2||\mathbb{E}[A^{\mathcal{M}}]||_F^2 \\
&= 2\text{Var}(A^{\mathcal{M}}) = K_0
\end{align}

The key insight is that for methods satisfying the same constraints (demand and capacity), the variance is primarily determined by input distribution randomness rather than the optimization algorithm itself, leading to method-independent convergence.$\hfill\blacksquare$
\end{Proof}

\begin{Corollary}[Performance Upper Bound for Single-Timeslot Methods]
\label{cor:performance_upper_bound}
For any single-timeslot method $\mathcal{M}$ applied over $T$ timeslots, the total expected cost satisfies:
$$\mathbb{E}[\text{Cost}_{\mathcal{M}}^{\text{total}}] \geq \sum_{t=1}^T (\text{RT}_t^{\text{OT}} + \beta \cdot \text{PC}_t^{\text{OT}}) + \alpha \cdot K_0 (T-1)$$
where $\text{RT}_t^{\text{OT}}$ and $\text{PC}_t^{\text{OT}}$ are the optimal response time and power cost achieved by OT at timeslot $t$.
\end{Corollary}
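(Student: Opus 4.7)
The plan is to decompose the total cost expression in Section~\ref{subsec: prob_formulation} into three additive pieces (response time, power cost, switching cost), bound each piece separately using the two preceding theorems, and then recombine via linearity of expectation. Concretely, for any single-timeslot method $\mathcal{M}$,
\begin{equation*}
\mathbb{E}[\text{Cost}_{\mathcal{M}}^{\text{total}}] = \sum_{t=1}^{T}\Bigl(\mathbb{E}[\text{RT}_t^{\mathcal{M}}] + \beta\,\mathbb{E}[\text{PC}_t^{\mathcal{M}}]\Bigr) + \alpha \sum_{t=2}^{T}\mathbb{E}[\Delta_t^{\mathcal{M}}],
\end{equation*}
so the corollary reduces to showing the first sum is at least $\sum_t(\text{RT}_t^{\text{OT}}+\beta\,\text{PC}_t^{\text{OT}})$ and the second sum is at least $K_0(T-1)$.

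First, I would handle the per-timeslot terms. Theorem~\ref{thm:ot_upper_bound} shows that for every realization of $(\mu_t,\nu_t,C_t)$, the OT allocation $P^*_t$ simultaneously minimizes response time and the OT transport cost (which is exactly $\text{PC}_t$ up to the weights in the problem formulation). Since $\mathcal{M}$'s allocation $A_t^{\mathcal{M}}=f_{\mathcal{M}}(\mu_t,\nu_t,C_t)$ still satisfies the demand and capacity marginals, applying Theorem~\ref{thm:ot_upper_bound} pointwise gives $\text{RT}_t^{\mathcal{M}}\ge\text{RT}_t^{\text{OT}}$ and $\text{PC}_t^{\mathcal{M}}\ge\text{PC}_t^{\text{OT}}$. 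Taking expectations and summing over $t$ yields the first half of the bound.

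Second, I would handle the switching term. By Theorem~\ref{thm:switching_convergence}, under Assumption~\ref{ass:temporal_independence} the expected per-step switching cost satisfies $\mathbb{E}[\Delta_t^{\mathcal{M}}]=K_0$ for each $t=2,\dots,T$, regardless of the specific $\mathcal{M}$. Linearity of expectation then gives $\sum_{t=2}^{T}\mathbb{E}[\Delta_t^{\mathcal{M}}]=K_0(T-1)$, which is actually an equality, and in particular at least $K_0(T-1)$. Combining with the previous step and multiplying by the weight $\alpha$ produces the claimed inequality.

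The main obstacle I anticipate is the pointwise application of Theorem~\ref{thm:ot_upper_bound}: the theorem as stated is phrased in terms of the OT objective and a max-response-time argument, so I need to verify that the same inequality transfers to the specific response-time and power-cost functionals $T_{\text{completion}}$ and $C_{\text{power}}$ used in Section~\ref{subsec: prob_formulation}. This amounts to checking that $T_{\text{completion}}$ is monotone in the per-region workload $w_j=\sum_i P_{i,j}$ and that $C_{\text{power}}$ is linear in $P$ with coefficients matching the OT cost matrix $C_{i,j}$, which should follow from the definitions but must be stated explicitly. A minor secondary subtlety is the boundary timeslot $t=1$: since there is no prior allocation, $\Delta_1^{\mathcal{M}}$ is undefined, which is why the switching sum runs from $t=2$ to $T$ and contributes only $T-1$ terms; I would make this convention explicit up front to avoid an off-by-one discrepancy with the stated bound.
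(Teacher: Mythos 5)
Your proposal is correct and matches the paper's intended derivation: the corollary is stated without an explicit proof, but it is meant to follow exactly as you argue, by splitting the cumulative objective into per-timeslot response-time and power terms (each bounded below by the OT optimum via Theorem~\ref{thm:ot_upper_bound}) plus the $T-1$ switching terms (each equal to $K_0$ in expectation by Theorem~\ref{thm:switching_convergence}). Your flagged caveats --- verifying that the response-time and power functionals of Section~\ref{subsec: prob_formulation} are actually the quantities minimized in Theorem~\ref{thm:ot_upper_bound}, and the $t=1$ boundary convention --- are genuine loose ends in the paper's own presentation rather than defects of your argument.
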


\subsection{TORTA's Performance Guarantee}
Now we establish conditions under which TORTA can surpass the performance upper bound of single-timeslot methods.
\begin{Definition}[TORTA Training Objective]
TORTA is trained with a constrained loss function:
$$\mathcal{L}_{\text{total}} = \mathcal{L}_{\text{PPO}} + \gamma \cdot \mathcal{L}_{\epsilon}(s, \epsilon) + \delta \cdot \mathcal{L}_{s}(s, \epsilon)$$
where:
\begin{align}
\mathcal{L}_{\epsilon}(s, \epsilon) &= \max\left(0, \frac{||B_t||_F - \epsilon_{\max}(s)}{\epsilon_0}\right) \\
\mathcal{L}_{s}(s, \epsilon) &= \max\left(0, \frac{s_{\min}(\epsilon) - s}{s_0}\right)
\end{align}
with $B_t = A_t^{\text{RL}} - A_t^{\text{OT}}$ and designed to ensure the performance advantage condition.
\end{Definition}

\begin{Theorem}[TORTA Performance Guarantee]
\label{thm:torta_guarantee}
When trained with the constrained loss function, TORTA satisfies:
\begin{enumerate}
\item \textbf{Switching cost improvement}: $\mathbb{E}[\Delta_t^{\text{RL}}] \leq K_0/s$ for some $s > 1$
\item \textbf{Single-timeslot performance control}: $\mathbb{E}[||B_t||_F] \leq \epsilon$
\item \textbf{Performance advantage condition}: When $\frac{1-1/s}{\epsilon} > \frac{L_R + \beta L_P}{\alpha \cdot K_0}$, TORTA strictly outperforms all single-timeslot methods.
\end{enumerate}
\end{Theorem}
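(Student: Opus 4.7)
The plan is to dispatch the three claims in order, with the real work concentrated in claim~(3). For claim~(1), I would argue that the hinge-type penalty $\mathcal{L}_{s}$ acts as a one-sided barrier: whenever the empirical switching cost $\mathbb{E}[\Delta_t^{\text{RL}}]$ exceeds $K_0/s$, the term $\max(0,(s_{\min}-s_{\text{current}})/s_0)$ contributes a strictly positive gradient to $\mathcal{L}_{\text{total}}$. Taking $\delta$ sufficiently large (a standard exact-penalty argument) forces any PPO stationary point to satisfy the switching constraint at equality or better, yielding $\mathbb{E}[\Delta_t^{\text{RL}}]\le K_0/s$. Claim~(2) is analogous: the penalty $\mathcal{L}_{\epsilon}$ enforces $\|B_t\|_F\le \epsilon_{\max}(s)$ in expectation, and I would identify $\epsilon=\epsilon_{\max}(s)$ so that $\mathbb{E}[\|B_t\|_F]\le\epsilon$ follows directly by Jensen or by linearity, depending on how the penalty is evaluated in training.

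The substantive part is claim~(3). The strategy is to produce an upper bound on TORTA's expected $T$-horizon cost and compare it with the lower bound from Corollary~\ref{cor:performance_upper_bound}. Writing TORTA's instantaneous cost as $\mathrm{RT}_t^{\text{RL}}+\beta\,\mathrm{PC}_t^{\text{RL}}+\alpha\,\Delta_t^{\text{RL}}$, I would invoke Lipschitz continuity of the response-time and power-cost functionals in the allocation matrix (with constants $L_R$ and $L_P$) to write
\begin{equation}
\mathrm{RT}_t^{\text{RL}}+\beta\,\mathrm{PC}_t^{\text{RL}} \le \mathrm{RT}_t^{\text{OT}}+\beta\,\mathrm{PC}_t^{\text{OT}} + (L_R+\beta L_P)\,\|B_t\|_F.
\end{equation}
Taking expectations, summing over $t=1,\dots,T$, and substituting claims~(1) and~(2) gives
\begin{equation}
\mathbb{E}[\text{Cost}_{\text{TORTA}}^{\text{total}}] \le \sum_{t=1}^{T}\!\bigl(\mathrm{RT}_t^{\text{OT}}+\beta\,\mathrm{PC}_t^{\text{OT}}\bigr) + (L_R+\beta L_P)\,T\epsilon + \alpha\,T\,K_0/s.
\end{equation}
Subtracting the lower bound from Corollary~\ref{cor:performance_upper_bound} yields the strict-improvement gap
\begin{equation}
\alpha K_0(T-1) - (L_R+\beta L_P)T\epsilon - \alpha T K_0/s,
\end{equation}
which is positive for $T$ large enough precisely when $\alpha K_0(1-1/s) > (L_R+\beta L_P)\epsilon$, i.e.\ when $(1-1/s)/\epsilon > (L_R+\beta L_P)/(\alpha K_0)$, matching the stated condition.

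I expect two obstacles. The first, and main one, is justifying the Lipschitz bounds on $\mathrm{RT}$ and $\mathrm{PC}$ in the allocation matrix: the response-time term contains queueing, compute and network contributions and is piecewise defined, so I would either state Lipschitzness as a regularity assumption on the environment (a mild assumption given bounded capacities and a bounded cost matrix $C$) or prove it locally on a compact region of feasible allocations using the closed-form expressions from Section~\ref{subsec: prob_formulation}. The second obstacle is the gap between the constrained optimum and what PPO actually converges to; I would handle this by treating $\gamma,\delta$ as exact-penalty multipliers and absorbing the usual PPO suboptimality into an additive $o(1)$ term, noting that the strict-inequality margin in the advantage condition leaves room to absorb such slack. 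A minor boundary issue is the $(T-1)$ versus $T$ discrepancy on the switching-cost term; for any fixed $s>1$ this is dominated by the linear-in-$T$ gap once $T\ge s/(s-1)$, so the asymptotic statement is unaffected.
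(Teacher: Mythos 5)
Your proposal is correct and follows essentially the same route as the paper's proof: the penalty terms are used to enforce the switching-cost bound $\mathbb{E}[\Delta_t^{\text{RL}}]\le K_0/s$ and the OT-deviation bound $\mathbb{E}[\|B_t\|_F]\le\epsilon$, Lipschitz constants $L_R,L_P$ convert the deviation into a per-timeslot cost degradation, and comparison against Corollary~\ref{cor:performance_upper_bound} yields the advantage condition $(1-1/s)/\epsilon>(L_R+\beta L_P)/(\alpha K_0)$. You are in fact more careful than the paper on the $(T-1)$-versus-$T$ bookkeeping (the paper silently extends a sum of negative terms from $T-1$ to $T$ summands), so your version correctly concludes strict improvement only for $T$ sufficiently large, whereas the paper asserts it for all $T$ via that unjustified step.
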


\begin{Proof}
\textbf{Part 1 (Switching cost improvement):} The PPO training objective includes an explicit smoothness reward:
$$r_t^{\text{smooth}} = -||A_t^{\text{RL}} - A_{t-1}^{\text{RL}}||_F^2$$

Through policy gradient optimization, the converged policy satisfies:
$$\mathbb{E}[\Delta_t^{\text{RL}}] = \mathbb{E}[||A_t^{\text{RL}} - A_{t-1}^{\text{RL}}||_F^2] \leq \frac{K_0}{s}$$

where $s > 1$ is the switching cost improvement factor controlled by the smoothness weight.

\textbf{Part 2 (Single-timeslot performance control):} Let $B_t = A_t^{\text{RL}} - A_t^{\text{OT}}$. Under the OT supervision constraint, the performance degradation is bounded by:
\begin{align}
\delta R_t &\leq L_R ||B_t||_F \\
\delta P_t &\leq L_P ||B_t||_F
\end{align}
where $L_R, L_P$ are Lipschitz constants. The constraint $\mathbb{E}[||B_t||_F]  \leq \epsilon$ is enforced by $\mathcal{L}_{\epsilon}$.

\textbf{Part 3 (Performance advantage):} The total cost difference between TORTA and the single-timeslot upper bound is:
\begin{align}
\text{Diff} &= \text{Cost}_{\text{TORTA}} - \text{Cost}_{\text{Upper}} \\
&= \sum_{t=1}^T (\delta R_t + \beta \delta P_t) + \alpha \sum_{t=1}^{T-1} (\Delta_t^{\text{RL}} - K_0) \\
&\leq \sum_{t=1}^T [(L_R + \beta L_P)\epsilon] + \alpha \sum_{t=1}^{T} \left(\frac{K_0}{s} - K_0\right) \\
&= T\left[(L_R + \beta L_P)\epsilon - \alpha K_0\left(1 - \frac{1}{s}\right)\right]
\end{align}

When the performance advantage condition $\frac{1-1/s}{\epsilon} > \frac{L_R + \beta L_P}{\alpha \cdot K_0}$ holds, we have:
$$1 - \frac{1}{s} > \epsilon \cdot \frac{L_R + \beta L_P}{\alpha \cdot K_0}$$

Substituting this into the Diff expression:
\begin{align}
\text{Diff} &< T\left[(L_R + \beta L_P)\epsilon - \alpha K_0 \cdot \epsilon \cdot \frac{L_R + \beta L_P}{\alpha \cdot K_0}\right] \\
&= T\left[(L_R + \beta L_P)\epsilon - (L_R + \beta L_P)\epsilon\right] = 0
\end{align}

Therefore, $\text{Cost}_{\text{TORTA}} < \text{Cost}_{\text{Upper}}$, proving TORTA's strict superiority.
$\hfill\blacksquare$
\end{Proof}

\begin{remark}
The performance advantage condition $\frac{1-1/s}{\epsilon} > \frac{L_R + \beta L_P}{\alpha \cdot K_0}$ provides explicit guidance for training TORTA. It shows that superior performance can be achieved by either: (1) increasing the switching cost improvement factor $s$, (2) tightening the OT deviation bound $\epsilon$, or (3) increasing the switching cost weight $\alpha$ relative to response time and power cost weights.
\end{remark}

\section{Training Implementation Details}
\label{detail: training}
\begin{algorithm}[htbp]
    \small
    \setstretch{1}
    \SetKwInOut{Input}{Input}
    \SetKwInOut{Output}{Output}
    \caption{TORTA Training Procedure}
    \label{alg:torta_training}
    \Input{Historical data $\mathcal{D}$, Environment parameters, Target condition $\frac{1-1/s}{\epsilon} > \frac{L_R + \beta L_P}{\alpha \cdot K_0}$}
    \Output{Trained policy $\pi_\theta$, predictor $P_\psi$}
    \BlankLine
    
    \tcp{\color{blue}\small{Initialization and Baseline Parameter Estimation}}
    Initialize policy network $\pi_\theta$, value network $V_\phi$, predictor $P_\psi$\;
    Pre-train predictor $P_\psi$ on historical data $\mathcal{D}$\;
    
    \BlankLine
    \tcp{\color{blue}\small{Estimate Baseline Parameters for Theoretical Conditions}}
    $K_0 \leftarrow$ Compute baseline switching costs from reactive methods\;
    $L_R, L_P \leftarrow$ Estimate Lipschitz constants via finite differences\;
    Set target parameters: $\epsilon_{\text{target}} = 0.15$, $s_{\text{target}} = 2.5$\;
    
    \BlankLine
    \For{epoch $e = 1$ \textbf{to} $E$}{
        \tcp{\color{blue}\small{Collect Experience with Constraint Monitoring}}
        \For{environment step $t = 1$ \textbf{to} $T$}{
            Collect experience $(s_t, a_t, r_t, s_{t+1})$ from environment\;
            Compute OT baseline $A_t^{\text{OT}}$ using current $\mu_t, \nu_t$\;
            Calculate deviation $B_t = A_t^{\text{RL}} - A_t^{\text{OT}}$\;
            Update switching cost estimate $s_{\text{current}} \leftarrow \frac{K_0}{\mathbb{E}[\Delta_t^{\text{RL}}]}$\;
            Update constraint weights $\gamma_t, \delta_t$ based on current violations\;
        }
        
        \BlankLine
        \tcp{\color{blue}\small{Policy Update with Theoretical Constraints}}
        Compute PPO loss $\mathcal{L}_{\text{PPO}}$ from collected experience\;
        Compute constraint losses $\mathcal{L}_{\epsilon} = \max(0, \frac{||B_t||_F - \epsilon_{\text{target}}}{\epsilon_0})$\;
        Compute constraint losses $\mathcal{L}_s = \max(0, \frac{s_{\text{target}} - s_{\text{current}}}{s_0})$\;
        
        \BlankLine
        \tcp{\color{blue}\small{Gradient Update and Condition Validation}}
        $\theta \leftarrow \theta - \nabla_\theta(\mathcal{L}_{\text{PPO}} + \gamma_t \mathcal{L}_{\epsilon} + \delta_t \mathcal{L}_s)$\;
        Validate performance advantage condition: $\frac{1-1/s_{\text{current}}}{\epsilon_{\text{current}}} \stackrel{?}{>} \frac{L_R + \beta L_P}{\alpha \cdot K_0}$\;
        
        \textbf{if} condition violated \textbf{then} increase $\gamma_t, \delta_t$ by factor 1.5\;
    }
    
    \Return{$\pi_\theta$, $P_\psi$}
\end{algorithm}
\subsection{Network Details}
For Reinforcement learning, the policy network $\pi_\theta$ is implemented as a fully connected neural network with three hidden layers (256, 512, 256 units) and ReLU activations. The output layer uses a softmax activation to generate probability distributions over regional allocations. The value network $V_\phi$ shares the same architecture but outputs scalar state values. We use the Adam optimizer with an initial learning rate of $3 \times 10^{-4}$ and exponential decay with factor 0.995 every 100 episodes.

The demand predictor $F_t = \text{Predictor}(U_{t-K:t}, Q_{t-K:t},\\ H_{t-K:t})$ employs a simple yet effective multi-layer perceptron architecture. The input consists of concatenated historical features over $K=5$ timeslots, yielding $\text{Input} = \text{Concat}([U_{t-5:t}, Q_{t-5:t}, H_{t-5:t}]) \in \mathbb{R}^{5 \times 3R}$. The network architecture follows a standard 3-layer design with ReLU activations: input layer ($15R$ dimensions) $\rightarrow$ hidden layer (512 units) $\rightarrow$ hidden layer (256 units) $\rightarrow$ output layer ($R$ dimensions with softmax). The training objective minimizes mean squared error with L2 regularization: $\mathcal{L}_{\text{pred}} = \frac{1}{N} \sum_{i=1}^N ||F_{t+1}^{\text{pred}} - F_{t+1}^{\text{actual}}||_2^2 + \lambda ||\theta||_2^2$ where $\lambda = 10^{-4}$.

\subsection{Constrained Training Framework}
To ensure the performance advantage condition $\frac{1-1/s}{\epsilon} > \frac{L_R + \beta L_P}{\alpha \cdot K_0}$ is satisfied during training, we implement adaptive constraint enforcement. The constraint weights $\gamma_t$ and $\delta_t$ for terms $\mathcal{L}_{\epsilon}$ and $\mathcal{L}_s$ are dynamically adjusted based on current performance: $\gamma_t = \gamma_0 \cdot \exp(\alpha_{\gamma} \cdot \max(0, ||B_t||_F - \epsilon_{\text{target}}))$ and $\delta_t = \delta_0 \cdot \exp(\alpha_{\delta} \cdot \max(0, s_{\text{target}} - s_{\text{current}}))$, where $B_t = A_t^{\text{RL}} - A_t^{\text{OT}}$ represents the deviation from the optimal transport solution.

The critical system parameters required for theoretical guarantees are estimated continuously during training. The baseline switching cost $K_0$ is computed as a moving average of switching costs from reactive baseline methods over  past timeslots. The Lipschitz constants $L_R$ and $L_P$ for response time and power cost functions are estimated empirically using finite differences over small perturbations in allocation matrices. The current switching cost improvement factor $s_{\text{current}}$ is calculated as the ratio of baseline switching costs to current RL switching costs, updated every 10 training episodes.

\bibliographystyle{plain}
\renewcommand{\bibfont}{\footnotesize} 
\bibliography{cas-refs}

\begin{thebibliography}{10}

\bibitem{claude3}
Anthropic.
\newblock Introducing the next generation of claude, 2024.

\bibitem{borzunov2023-petals}
Alexander Borzunov, Max Ryabinin, Artem Chumachenko, Dmitry Baranchuk, Tim Dettmers, Younes Belkada, Pavel Samygin, and Colin Raffel.
\newblock Distributed inference and fine-tuning of large language models over the internet, 2023.

\bibitem{choi2025elisefficientllmiterative-elis}
Seungbeom Choi, Jeonghoe Goo, Eunjoo Jeon, Mingyu Yang, and Minsung Jang.
\newblock Elis: Efficient llm iterative scheduling system with response length predictor, 2025.

\bibitem{Choudhury2024mast}
Arnab Choudhury, Yang Wang, Tuomas Pelkonen, Kutta Srinivasan, Abha Jain, Shenghao Lin, Delia David, Siavash Soleimanifard, Michael Chen, Abhishek Yadav, Ritesh Tijoriwala, Denis Samoylov, and Chunqiang Tang.
\newblock {MAST}: Global scheduling of {ML} training across {Geo-Distributed} datacenters at hyperscale.
\newblock In {\em 18th USENIX Symposium on Operating Systems Design and Implementation (OSDI 24)}, pages 563--580, Santa Clara, CA, July 2024. USENIX Association.

\bibitem{deepseekai2025deepseekv3technicalreport}
DeepSeek-AI, Aixin Liu, et~al.
\newblock Deepseek-v3 technical report, 2025.

\bibitem{dong2025singleaiclustersurvey-geodist}
Haotian Dong, Jingyan Jiang, Rongwei Lu, Jiajun Luo, Jiajun Song, Bowen Li, Ying Shen, and Zhi Wang.
\newblock Beyond a single ai cluster: A survey of decentralized llm training, 2025.

\bibitem{donisch2024inferenceopt-llm}
Leo Donisch, Sigurd Schacht, and Carsten Lanquillon.
\newblock Inference optimizations for large language models: Effects, challenges, and practical considerations, 2024.

\bibitem{ServerlessLLM}
Yao Fu et~al.
\newblock {ServerlessLLM}: {Low-Latency} serverless inference for large language models.
\newblock In {\em 18th USENIX Symposium on Operating Systems Design and Implementation (OSDI 24)}, pages 135--153, Santa Clara, CA, July 2024. USENIX Association.

\bibitem{fu2024efficientllmschedulinglearning-ltr}
Yichao Fu, Siqi Zhu, Runlong Su, Aurick Qiao, Ion Stoica, and Hao Zhang.
\newblock Efficient llm scheduling by learning to rank, 2024.

\bibitem{ha2024retrystorm}
Farzad Habibi, Tania Lorido-Botran, Ahmad Showail, Daniel~C. Sturman, and Faisal Nawab.
\newblock Msf-model: Queuing-based analysis and prediction of metastable failures in replicated storage systems.
\newblock In {\em 2024 43rd International Symposium on Reliable Distributed Systems (SRDS)}, pages 12--22, 2024.

\bibitem{hansenpalmus2024communicationcompression}
Jan Hansen-Palmus, Michael~Truong Le, Oliver Hausdörfer, and Alok Verma.
\newblock Communication compression for tensor parallel llm inference, 2024.

\bibitem{hu2025speculativedecodingbeyondindepth}
Yunhai Hu, Zining Liu, Zhenyuan Dong, Tianfan Peng, Bradley McDanel, and Sai~Qian Zhang.
\newblock Speculative decoding and beyond: An in-depth survey of techniques, 2025.

\bibitem{huang2022metastablefail}
Lexiang Huang, Matthew Magnusson, Abishek~Bangalore Muralikrishna, Salman Estyak, Rebecca Isaacs, Abutalib Aghayev, Timothy Zhu, and Aleksey Charapko.
\newblock Metastable failures in the wild.
\newblock In {\em 16th USENIX Symposium on Operating Systems Design and Implementation (OSDI 22)}, pages 73--90, Carlsbad, CA, July 2022. USENIX Association.

\bibitem{hwang2023tutel}
Changho Hwang et~al.
\newblock Tutel: Adaptive mixture-of-experts at scale.
\newblock {\em Proceedings of Machine Learning and Systems}, 5:269--287, 2023.

\bibitem{hwang2024pre}
Ranggi Hwang et~al.
\newblock Pre-gated moe: An algorithm-system co-design for fast and scalable mixture-of-expert inference.
\newblock In {\em 2024 ACM/IEEE 51st Annual International Symposium on Computer Architecture (ISCA)}, pages 1018--1031. IEEE, 2024.

\bibitem{jiang2025demystifying}
Youhe Jiang et~al.
\newblock Demystifying cost-efficiency in llm serving over heterogeneous gpus, 2025.

\bibitem{swap-jens}
Jens Kehne et~al.
\newblock Gpuswap: Enabling oversubscription of gpu memory through transparent swapping.
\newblock {\em SIGPLAN Not.}, 50(7):65–77, March 2015.

\bibitem{kim2025optimizingllminferencedatabase-infermax}
Kyoungmin Kim, Kijae Hong, Caglar Gulcehre, and Anastasia Ailamaki.
\newblock Optimizing llm inference for database systems: Cost-aware scheduling for concurrent requests, 2025.

\bibitem{kossmann2025gpuhalfemptyhalffullpractical-larry}
Ferdi Kossmann, Bruce Fontaine, Daya Khudia, Michael Cafarella, and Samuel Madden.
\newblock Is the gpu half-empty or half-full? practical scheduling techniques for llms, 2025.

\bibitem{lai-warmup}
Fan Lai et~al.
\newblock {ModelKeeper}: Accelerating {DNN} training via automated training warmup.
\newblock In {\em 20th USENIX Symposium on Networked Systems Design and Implementation (NSDI 23)}, pages 769--785, Boston, MA, April 2023. USENIX Association.

\bibitem{computeNorth}
Vili Lehdonvirta et~al.
\newblock {\em Compute North vs. Compute South: The Uneven Possibilities of Compute-Based AI Governance Around the Globe}, page 828–838.
\newblock AAAI Press, 2025.

\bibitem{li2024optimizing}
Jialong Li et~al.
\newblock Optimizing mixture-of-experts inference time combining model deployment and communication scheduling.
\newblock {\em arXiv preprint arXiv:2410.17043}, 2024.

\bibitem{Lina}
Jiamin Li et~al.
\newblock Accelerating distributed $\{$MoE$\}$ training and inference with lina.
\newblock In {\em 2023 USENIX Annual Technical Conference (USENIX ATC 23)}, pages 945--959, 2023.

\bibitem{AlpaServe}
Zhuohan Li et~al.
\newblock {AlpaServe}: Statistical multiplexing with model parallelism for deep learning serving.
\newblock In {\em 17th USENIX Symposium on Operating Systems Design and Implementation (OSDI 23)}, pages 663--679, Boston, MA, July 2023. USENIX Association.

\bibitem{lin2025bulletboostinggpuutilization-bullet}
Zejia Lin, Hongxin Xu, Guanyi Chen, Xianwei Zhang, and Yutong Lu.
\newblock Bullet: Boosting gpu utilization for llm serving via dynamic spatial-temporal orchestration, 2025.

\bibitem{marzollo2024sssdsimplyscalablespeculativedecoding}
Michele Marzollo, Jiawei Zhuang, Niklas Roemer, Lorenz~K. Müller, and Lukas Cavigelli.
\newblock Sssd: Simply-scalable speculative decoding, 2024.

\bibitem{Naseer2020ZDR}
Usama Naseer, Luca Niccolini, Udip Pant, Alan Frindell, Ranjeeth Dasineni, and Theophilus~A. Benson.
\newblock Zero downtime release: Disruption-free load balancing of a multi-billion user website.
\newblock In {\em SIGCOMM '20}, page 529–541, New York, NY, USA, 2020. Association for Computing Machinery.

\bibitem{openai2025status}
{OpenAI}.
\newblock Openai status history.
\newblock Documenting server load fluctuations and regional service limitations.

\bibitem{openai2024pretraining}
{OpenAI}.
\newblock Pre-training {GPT}-4.5.
\newblock Video interview discussing GPU dependency in {GPT}-4's reasoning process.

\bibitem{openai2023gpt4}
OpenAI.
\newblock Gpt-4 technical report.
\newblock {\em arXiv preprint arXiv:2303.08774}, 2023.

\bibitem{orlowski2010sndlib}
Sebastian Orlowski, Roland Wess{\"a}ly, Michal Pi{\'o}ro, and Artur Tomaszewski.
\newblock Sndlib 1.0—survivable network design library.
\newblock {\em Networks: An International Journal}, 55(3):276--286, 2010.

\bibitem{pan2024instinferinstorageattentionoffloading}
Xiurui Pan, Endian Li, Qiao Li, Shengwen Liang, Yizhou Shan, Ke~Zhou, Yingwei Luo, Xiaolin Wang, and Jie Zhang.
\newblock Instinfer: In-storage attention offloading for cost-effective long-context llm inference, 2024.

\bibitem{deepspeed2022}
Samyam Rajbhandari et~al.
\newblock Deepspeed-moe: Advancing mixture-of-experts inference and training to power next-generation ai scale.
\newblock In {\em International conference on machine learning}, pages 18332--18346. PMLR, 2022.

\bibitem{ribar2024sparqattentionbandwidthefficientllm}
Luka Ribar, Ivan Chelombiev, Luke Hudlass-Galley, Charlie Blake, Carlo Luschi, and Douglas Orr.
\newblock Sparq attention: Bandwidth-efficient llm inference, 2024.

\bibitem{schulman2017-ppo}
John Schulman, Filip Wolski, Prafulla Dhariwal, Alec Radford, and Oleg Klimov.
\newblock Proximal policy optimization algorithms.
\newblock {\em arXiv preprint arXiv:1707.06347}, 2017.

\bibitem{shahout2024dontstopnowembedding-trail}
Rana Shahout, Eran Malach, Chunwei Liu, Weifan Jiang, Minlan Yu, and Michael Mitzenmacher.
\newblock Don't stop me now: Embedding based scheduling for llms, 2024.

\bibitem{kun-switch}
Kun Suo et~al.
\newblock Quantifying context switch overhead of artificial intelligence workloads on the cloud and edges.
\newblock In {\em Proceedings of the 36th Annual ACM Symposium on Applied Computing}, SAC '21, page 1182–1189, New York, NY, USA, 2021. Association for Computing Machinery.

\bibitem{geminiteam2025geminifamilyhighlycapable}
Gemini Team et~al.
\newblock Gemini: A family of highly capable multimodal models, 2025.

\bibitem{techcrunch2025gpt45delay}
{TechCrunch Staff}.
\newblock Openai’s {GPT-4.5} delayed due to {GPU} shortages, pro users get early access.
\newblock {\em TechCrunch}, 2025.
\newblock Reports CEO Sam Altman's confirmation of phased rollout due to GPU scarcity.

\bibitem{Vakilinia}
Shahin Vakilinia.
\newblock Energy efficient temporal load aware resource allocation in cloud computing datacenters.
\newblock {\em J. Cloud Comput.}, 7(1), December 2018.

\bibitem{villani2008optimal}
C{\'e}dric Villani et~al.
\newblock {\em Optimal transport: old and new}, volume 338.
\newblock Springer, 2008.

\bibitem{worldpopulationreview2025electricity}
{World Population Review}.
\newblock Cost of electricity by country 2025, 2025.
\newblock Accessed: [2025-05].

\bibitem{wu2024fastdistributedinferenceserving-fastserve}
Bingyang Wu, Yinmin Zhong, Zili Zhang, Shengyu Liu, Fangyue Liu, Yuanhang Sun, Gang Huang, Xuanzhe Liu, and Xin Jin.
\newblock Fast distributed inference serving for large language models, 2024.

\bibitem{xia2025speculativedecoding}
Heming Xia, Cunxiao Du, Yongqi Li, Qian Liu, and Wenjie Li.
\newblock Tutorial proposal: Speculative decoding for efficient llm inference, 2025.

\bibitem{xia2025skylb}
Tian Xia, Ziming Mao, Jamison Kerney, Ethan~J. Jackson, Zhifei Li, Jiarong Xing, Scott Shenker, and Ion Stoica.
\newblock Skylb: A locality-aware cross-region load balancer for llm inference, 2025.

\bibitem{Xie2025NVIDIA-fcfs}
Kaiyu Xie et~al.
\newblock Nvidia/{TensorRT}-{LLM}.
\newblock https://github.com/NVIDIA/TensorRT-LLM, jul 1 2025.

\bibitem{xie2025reimaginingcompresssionmemoryaccessllm}
Rui Xie, Asad~Ul Haq, Linsen Ma, Yunhua Fang, Zirak~Burzin Engineer, Liu Liu, and Tong Zhang.
\newblock Reimagining memory access for llm inference: Compression-aware memory controller design, 2025.

\bibitem{xu2023compresspromptimprovingaccuracyefficiency}
Zhaozhuo Xu, Zirui Liu, Beidi Chen, Yuxin Tang, Jue Wang, Kaixiong Zhou, Xia Hu, and Anshumali Shrivastava.
\newblock Compress, then prompt: Improving accuracy-efficiency trade-off of llm inference with transferable prompt, 2023.

\bibitem{MERL_Yang_2024}
Peng Yang, Laoming Zhang, Haifeng Liu, and Guiying Li.
\newblock Reducing idleness in financial cloud services via multi-objective evolutionary reinforcement learning based load balancer.
\newblock {\em Science China Information Sciences}, 67(2), January 2024.

\bibitem{ye2025flashinferefficientcustomizableattention}
Zihao Ye, Lequn Chen, Ruihang Lai, Wuwei Lin, Yineng Zhang, Stephanie Wang, Tianqi Chen, Baris Kasikci, Vinod Grover, Arvind Krishnamurthy, and Luis Ceze.
\newblock Flashinfer: Efficient and customizable attention engine for llm inference serving, 2025.

\bibitem{yu2022non-preempitve}
Gyeong-In Yu, Joo~Seong Jeong, Geon-Woo Kim, Soojeong Kim, and Byung-Gon Chun.
\newblock Orca: A distributed serving system for {Transformer-Based} generative models.
\newblock In {\em 16th USENIX Symposium on Operating Systems Design and Implementation (OSDI 22)}, pages 521--538, Carlsbad, CA, July 2022. USENIX Association.

\bibitem{Zhabelova2015dcpower}
Gulnara Zhabelova, Alireza Yavarian, and Valeriy Vyatkin.
\newblock Data center power dynamics within the settings of regional power grid.
\newblock In {\em 2015 IEEE 20th Conference on Emerging Technologies \& Factory Automation (ETFA)}, pages 1--5, 2015.

\bibitem{zheng2023-pia}
Zangwei Zheng, Xiaozhe Ren, Fuzhao Xue, Yang Luo, Xin Jiang, and Yang You.
\newblock Response length perception and sequence scheduling: An llm-empowered llm inference pipeline, 2023.

\bibitem{zhong2024distserve}
Yinmin Zhong, Shengyu Liu, Junda Chen, Jianbo Hu, Yibo Zhu, Xuanzhe Liu, Xin Jin, and Hao Zhang.
\newblock Distserve: Disaggregating prefill and decoding for goodput-optimized large language model serving, 2024.

\end{thebibliography}


\end{document}